\documentclass[envcountsame,envcountsect,runningheads]{llncs}

\bibliographystyle{plain}

\usepackage{graphicx}
\usepackage{multirow}
\usepackage{latexsym}
\usepackage{amsmath,amssymb,amstext}
\usepackage{graphicx}
\usepackage{wasysym}
\usepackage{float}
\usepackage{enumerate}
\usepackage[arrow, matrix, curve]{xy}
\usepackage[justification=centering,singlelinecheck=off]{caption}
\usepackage{color}
\usepackage{url}
\usepackage{longtable}
\usepackage{wrapfig}
\usepackage{sidecap}

\input{macros.def}

\begin{document} 

\title{Complexity of Fractran and Productivity}
   
\author{J\"{o}rg Endrullis\inst{1} \and Clemens Grabmayer\inst{2} \and Dimitri Hendriks\inst{1}}
\institute{
    Vrije Universiteit Amsterdam,
    Department of Computer Science\\
    De Boelelaan 1081a,
    1081 HV Amsterdam,
    The Netherlands\\
    \email{joerg@few.vu.nl}
      \mbox{ }
    \email{diem@cs.vu.nl}
  \and
    Universiteit Utrecht,
    Department of Philosophy\\
    Heidelberglaan 8,
    3584 CS Utrecht,
    The Netherlands\\
    \email{clemens@phil.uu.nl}
}
\maketitle

\begin{abstract}
  In functional programming languages 
  the use of infinite structures is common practice.
  For total correctness of programs dealing with infinite structures
  one must guarantee that every finite part of the result can be 
  evaluated in finitely many steps.
  This is known as productivity.
  For programming with infinite structures, 
  productivity is what termination in well-defined results
  is for programming with finite structures. 

  Fractran is a simple Turing-complete programming language
  invented by Conway. 
  We prove that the question whether 
  a Fractran program halts on all positive integers
  is \mbox{$\cpi{0}{2}$-complete}.
  In functional programming, productivity typically is a property 
  of individual terms with respect to the inbuilt evaluation strategy.
  By encoding Fractran programs as specifications of infinite lists,
  we establish that this notion of productivity is $\cpi{0}{2}$-complete
  even for some of the most simple specifications.
  Therefore it is harder than termination of individual terms.
  In addition, we explore generalisations of the notion of productivity,
  and prove that their computational complexity is in the analytical hierarchy,
  thus exceeding the expressive power of first-order logic.
\end{abstract}

\section{Introduction}\label{sec:intro}
For programming with infinite structures, productivity is 
what termination is for programming with finite structures.
In lazy functional programming languages like Haskell, Miranda or Clean
the use of data structures, whose intended semantics is an
infinite structure, is common practice.
Programs dealing with such infinite structures
can very well be terminating. 
For example, consider the Haskell program implementing a version of Eratosthenes' sieve: 
\begin{verbatim}
   prime n = primes !! (n-1)
   primes = sieve [2..]
   sieve (n:xs) = n:(sieve (filter (\m -> m `mod` n /= 0) xs))
\end{verbatim}
where \verb=prime n= returns the $n$-th prime number for every $n \ge 1$.
The function \verb=prime= is terminating,
despite the fact that it contains a call to the non-terminating function \verb=primes= 
which, in the limit, rewrites to the infinite list of prime numbers in ascending order.
To make this possible, the strategy with respect to which the terms are evaluated is crucial.
Obviously, we cannot fully evaluate \verb=primes= before extracting the $n$-th element.
For this reason, lazy functional languages
typically use a form of outermost-needed rewriting 
where only needed, finite parts 
of the infinite structure are evaluated, 
see for example~\cite{peyt:1987}.

Productivity captures the intuitive notion of unlimited progress, 
of `working' programs producing values indefinitely,
programs immune to livelock and deadlock, like \verb=primes= above.
A recursive specification is called productive if
not only can the specification be evaluated continually to build up
an infinite normal form, but this infinite expression is also
meaningful in the sense that it represents an infinite object
from the intended domain.
The study of productivity (of stream specifications in particular)
was pioneered by Sijtsma~\cite{sijt:1989}.
More recently, a decision algorithm for productivity of stream specifications 
from an expressive syntactic format has been developed~\cite{endr:grab:hend:isih:klop:2007} 
and implemented~\cite{propro}.

We consider various variants of the notion of productivity and 
pinpoint their computational complexity in the arithmetical and analytical hierarchy.
In functional programming, expressions are evaluated according to an inbuilt evaluation strategy.
This gives rise to \emph{productivity with respect to an evaluation strategy}.
We show that this property is $\cpi{0}{2}$-complete (for individual terms)
using a standard encoding of Turing machines into term rewriting systems.
Next, we explore two generalisations of this concept: \emph{strong} and \emph{weak productivity}.
Strong productivity requires every outermost-fair rewrite sequence to `end
in' a constructor normal form, 
whereas weak productivity demands only the existence of a rewrite sequence
to a constructor normal form.
As it turns out, these properties are of analytical complexity: $\cpi{1}{1}$ and $\csig{1}{1}$-complete,
respectively.

Finally, we encode Fractran programs into stream specifications.
In contrast to the encoding of Turing machines,
the resulting specifications are of a very simple form
and do not involve any computation on the elements of the stream.
We show that the uniform halting problem of Fractran programs is $\cpi{0}{2}$-complete.
(Although Turing-completeness of Fractran is folklore, the exact complexity
has not yet been investigated before.)
Consequently we obtain a strengthening of the earlier mentioned $\cpi{0}{2}$-completeness result
for productivity.

Fractran~\cite{conw:1987} is a remarkably simple Turing-complete programming language 
invented by the mathematician John Horton Conway.
A Fractran program is a finite list of fractions 
$\iafrac{1},\ldots,\iafrac{k}$. 
Starting with a positive integer $n_0$, the algorithm successively calculates $n_{i+1}$
by multiplying $n_i$ with the first fraction that yields an integer again.
The algorithm halts if there is no such fraction.

To illustrate the algorithm we consider an example of Conway from~\cite{conw:1987}:
\[\frac{17}{91}, \frac{78}{85}, \frac{19}{51}, \frac{23}{38},
  \frac{29}{33}, \frac{77}{29}, \frac{95}{23}, \frac{77}{19},
  \frac{1}{17}, \frac{11}{13}, \frac{13}{11}, \frac{15}{14}, \frac{15}{2}, \frac{55}{1}\]
We start with $n_0 = 2$. 
The leftmost fraction which yields an integer product is $\frac{15}{2}$, 
and so $n_1 = 2\cdot \frac{15}{2} = 15$. 
Then we get $n_2 = 15 \cdot \frac{55}{1} = 825$, etcetera.
By successive application of the algorithm, we obtain the following infinite sequence:
\[
  2, 15, 825, 725, 1925, 2275, 425, 390, 330, 290, 770,\ldots
\]
Apart from $2^1$, the powers of $2$ occurring in this infinite sequence 
are $2^2, 2^3, 2^5, 2^7$, $2^{11}, 2^{13}, 2^{17}, 2^{19},\ldots$,
where the exponents form the sequence of primes.

We translate Fractran programs to stream specifications in such a way that
the specification is productive if and only if the program halts on all $n_0 > 1$.
Let us define the target format of this translation:
the \emph{lazy stream format} (LSF\label{page:LSF}).
LSF consists of stream specifications 
of the form $\strff{M} \to \cxtap{\acxt}{\strff{M}}$ 
where $\acxt$ is a context built solely from: 
one data element $\pebble$,
the stream constructor `$\scons$', 
the functions $\head{\cons{x}{\astr}} \to x$ 
and $\tail{\cons{x}{\astr}} \to \astr$,
unary stream functions $\smodn{n}$, 
and $k$-ary stream functions $\szipn{k}$
with the following defining rules, 
for every $n,k \geq 1$:
\begin{gather}
  \begin{aligned}
    \modn{n}{\astr} &\to \cons{\head{\astr}}{\modn{n}{\tailn{n}{\astr}}} 
    \\
    \zipn{k}{\iastr{1},\iastr{2}\ldots,\iastr{k}}
    &\to \cons{\head{\iastr{1}}}{\zipn{k}{\iastr{2},\ldots,\iastr{k},\tail{\iastr{1}}}}
  \end{aligned}
  \tag{LSF}
  \label{LSF}
\end{gather}
By reducing the uniform halting problem of Fractran programs to productivity of LSF,
we get that productivity for LSF is $\cpi{0}{2}$-complete.

This undecidability result stands in sharp contrast to 
the decidability of productivity for the \emph{pure stream format} (PSF,~\cite{endr:grab:hend:isih:klop:2007}).
Let us elaborate on the difference between these two formats.
Examples of specifications in PSF are:
\begin{align*}
  \strff{J} \to \cons{0}{\cons{1}{\even{\strff{J}}}}
  &&\text{and}&&
  \strff{Z} \to \cons{0}{\zip{\even{\strff{Z}}}{\odd{\strff{Z}}}}\punc,
\end{align*}
including the defining rules for the stream functions involved:
\begin{align*}
  \even{\cons{x}{\astr}} \to \cons{x}{\odd{\astr}}\punc, 
  &&
  \odd{\cons{x}{\astr}} \to \even{\astr}\punc,
  && 
  \zip{\cons{x}{\astr}}{\bstr} \to \cons{x}{\zip{\bstr}{\astr}}\punc,
\end{align*}
where $\szip$ `zips' two streams alternatingly into one,
and $\seven$ ($\sodd$) returns a stream consisting of the elements at its even (odd) positions.
The specification for $\strff{Z}$ produces the stream 
$\cons{0}{\cons{0}{\cons{0}{\ldots}}}$ of zeros,
whereas the infinite normal form of $\strff{J}$ is 
$\cons{0}{\cons{1}{\cons{0}{\cons{0}{\sstrev^{\omega}}}}}$,
which is not a constructor normal form.

Excluded from PSF is the observation function on streams $\head{\cons{x}{\astr}} \to x$. 
This is for a good reason, as we shall see shortly. 
PSF is essentially layered: data terms (terms of sort $\datasort$) 
cannot be built using stream terms (terms of sort $\streamsort$).
As soon as \emph{stream dependent} data functions are admitted, 
the complexity of the productivity problem of such an extended format is increased.
Indeed, as our Fractran translation shows,
productivity of even the most simple stream specifications
is undecidable and $\cpi{0}{2}$-hard.
The problem with stream dependent data functions is that they possibly create `look-ahead': 
the evaluation of the `current' stream element may depend on the evaluation of `future' stream elements.
To see this, consider an example from~\cite{sijt:1989}:
\[
  \strff{S}_n \to \cons{0}{\cons{\strnth{\strff{S}_n}{n}}{\strff{S}_n}}
\]
where for a term $t$ of sort stream and $n\in\nat$, 
we write $\strnth{t}{n}$ as a shorthand for $\head{\tailn{n}{t}}$. 
If we take $n$ to be an even number, then $\strff{S}_n$ is productive, 
whereas it is unproductive for odd $n$. 

A hint for the fact that it is $\cpi{0}{2}$-hard to decide 
whether a lazy specification is productive
already comes from a simple encoding of the Collatz conjecture
(also known as the `$3x{+}1$-problem'~\cite{laga:1985})
into a productivity problem. 
Without proof we state: 
\textit{the Collatz conjecture is true if and if only
the following specification produces the infinite chain 
$\cons{\pebble}{\cons{\pebble}{\cons{\pebble}{\ldots}}}$
of data elements $\pebble$:}
\begin{equation}
  \coll \to \cons{\pebble}{\zipn{2}{\coll,\modn{6}{\tailn{9}{\coll}}}}
  \label{eq:collatz}
\end{equation}
In order to understand the operational difference between rules in PSF and rules in LSF,
consider the following two rules:
\begin{align}
  \readlazy{\astr}
  &\to \strcns{\strhd{\astr}}{\readlazy{\strtl{\astr}}}
  \label{eq:read:lazy}
  \\
  \readpure{\strcns{x}{\astr}}
  &\to \strcns{x}{\readpure{\astr}}
  \label{eq:read:pure}
\end{align}
The functions defined by these rules are extensionally equivalent:
they both implement the identity function on fully developed streams. 
However, intensionally, or operationally, there is a difference.
A term $\readpure{\astrtrm}$ is a redex 
only in case $\astrtrm$ is of the form $\strcns{\adattrm}{\bstrtrm}$,
whereas $\readlazy{\astrtrm}$ constitutes a redex 
for \emph{every} stream term $\astrtrm$,
and so $\strhd{\astrtrm}$ can be undefined.
The `lazy' rule~\eqref{eq:read:lazy} \emph{postpones} pattern matching.
Although in PSF we can define functions 
$\spuremodn{n}$ and $\spurezipn{k}$ 
extensionally equivalent to $\smodn{n}$ and $\szipn{k}$,
a pure version $\mkpure{\coll}$ of $\coll$ in~\eqref{eq:collatz} above
(using $\spuremodn{6}$ and $\spurezipn{2}$ instead)
can easily be seen to be not productive (it produces two data elements only),
and to have no bearing on the Collatz conjecture.

\paragraph{Contribution and Overview.}
In Section~\ref{sec:fractran} we show that the uniform halting problem of Fractran programs 
is $\cpi{0}{2}$-complete.
This is the problem of determining whether a program terminates on all positive integers.
Turing-completeness of a computational model does not imply that 
the uniform halting problem
in the strong sense of termination on \emph{all configurations} is $\cpi{0}{2}$-complete.
For example, assume that we extend Turing machines with a special non-terminating state.
Then the computational model obtained can still compute 
every recursive function.
However, the uniform halting problem becomes trivial.

Our result is a strengthening of the result in~\cite{kurt:simo:2007} 
where it has been shown that the generalised Collatz problem (GCP) is $\cpi{0}{2}$-complete.
This is because every Fractran program $\aprg$ can easily be translated into a Collatz function $f$
such that the uniform halting problem for $\aprg$ is equivalent to the GCP for $f$.
The other direction is not immediate, since Fractran programs form a strict subset of Collatz functions.
We discuss this in more detail in Section~\ref{sec:fractran}. 

In Section~\ref{sec:productivity} we explore alternative definitions of productivity
and make them precise in the framework of term rewriting.
These can be highly undecidable: `strong productivity' 
turns out to be \mbox{$\cpi{1}{1}$-complete}
and `weak productivity' is \mbox{$\csig{1}{1}$-complete}.
Productivity of individual terms with respect to a computable strategy,
which is the notion used in functional programming,
is \mbox{$\cpi{0}{2}$-complete}.

In Section~\ref{sec:progs2specs} we prove that productivity $\cpi{0}{2}$-complete even for 
specifications of the restricted LSF format.
The new proof uses a simple encoding of Fractran programs $\aprg$
into stream specifications of the form $\transl{\aprg}\to\acxtap{\transl{\aprg}}$,
in such a way that $\transl{\aprg}$ is productive 
if and only if the program $\aprg$ halts on all inputs.
The resulting stream specifications are very simple compared to the 
ones resulting from encoding of Turing machines employed in Section~\ref{sec:productivity}.
Whereas the Turing machine encoding essentially uses calculations
on the elements of the list,
the specifications obtained from the Fractran encoding contain no operations 
on the list elements.
In particular, the domain of data elements is a singleton.

\paragraph{Related Work.}
In~\cite{endr:geuv:zant:2009} undecidability of different properties
of first-order TRSs is analysed.
While the standard properties of TRSs turn out to be
either $\csig{0}{1}$- or $\cpi{0}{2}$-complete,
the complexity of the dependency pair problems~\cite{arts:gies:00} is essentially analytical:
it is shown to be $\cpi{1}{1}$-complete.
We employ the latter result as a basis for our
$\cpi{1}{1}$- and $\csig{1}{1}$-completeness results for productivity.

Ro\c{s}u~\cite{rosu:2006} shows that equality of stream specifications
is $\cpi{0}{2}$-complete. We remark that this result can be obtained
as a corollary of our translation of Fractran programs $\aprg$ to stream specifications 
$\transl{\aprg}$. Stream specifications $\transl{\aprg}$ 
have the stream $\strcns{\bullet}{\strcns{\bullet}{\ldots}}$ as unique solutions
if and only if they are productive.
Thus $\cpi{0}{2}$-completeness of productivity of these specifications
implies $\cpi{0}{2}$-completeness of the stream equality problem 
$\transl{\aprg} = \strcns{\bullet}{\strcns{\bullet}{\ldots}}$.

One of the reviewers pointed us to recent work~\cite{simo:2009} of \mbox{Grue Simonsen}
(not available at the time of writing)
where $\cpi{0}{2}$-completeness of productivity of orthogonal stream specifications is shown.
Theorem~\ref{thm:strategy} below can be seen as a sharpening of that result
in that we consider general TRSs and productivity with respect to arbitrary evaluation strategies.
For orthogonal systems the evaluation strategy is irrelevant as long as it is outermost-fair.
Moreover we further strengthen the result on orthogonal stream specifications by
restricting the format to LSF.

\section{Fractran}\label{sec:fractran}

The one step computation of a Fractran program is a partial function.
\begin{definition}\normalfont\label{def:fractran:function}
  Let $\aprg = \iafrac{1},\ldots,\iafrac{k}$ be a Fractran program.
  The partial function $f_\aprg \funin \nat \pto \nat$ is defined for all $n\in\nat$ by:
  \begin{equation*}
    \funap{f_\aprg}{n} =
    \begin{cases}
      n\cdot\iafrac{i} & \text{where $\iafrac{i}$ is the first fraction of $\aprg$ 
                               such that $n\cdot\iafrac{i}\in\nat$,} \\
      \text{undefined}          & \text{if no such fraction exists.}
    \end{cases}
  \end{equation*}
  We say that $\aprg$ \emph{halts} on $n \in \nat$ if there exists $i \in \nat$ 
  such that $\funap{f_\aprg^{\hspace{.08em}i}}{n} = \text{undefined}$.
  For $n,m \in \nat$ we write $n \to_{\aprg} m$ whenever $m = \funap{f_\aprg}{n}$.
\end{definition}

The Fractran program for generating prime numbers, that we discussed in the introduction,
is non-terminating for all starting values $n_0$,
because the product of any integer with $\frac{55}{1}$ is an integer again.
However, in general, termination of Fractran programs is undecidable.
\begin{theorem}\label{fractran:pi02}
  The uniform halting problem for Fractran programs, that is, deciding 
  whether a program halts for every starting value $n_0 \in \pnat$,
  is $\cpi{0}{2}$-complete. 
\end{theorem}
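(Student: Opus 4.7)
The plan is to prove the upper and lower bounds separately. For the \emph{upper bound}, uniform halting of a Fractran program $\aprg$ is expressed by $\forall n \in \pnat\, \exists i \in \nat\, (\funap{f_\aprg^i}{n} \text{ is undefined})$; the inner predicate is decidable, since we can simulate $i$ steps of $\aprg$ on $n$ and check whether the process has halted. Hence the problem lies in $\cpi{0}{2}$.

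For the \emph{lower bound}, I would reduce from a $\cpi{0}{2}$-complete problem such as totality of Turing machines---the set of Turing machines that halt on every input---or equivalently uniform halting of two-counter Minsky machines. Given such a machine $\tm$, the goal is to construct a Fractran program $\aprg_\tm$ such that $\aprg_\tm$ halts starting from $n_0 \in \pnat$ iff $\tm$ halts on input $n_0$; then uniform halting of $\aprg_\tm$ precisely corresponds to totality of $\tm$. The construction builds on Conway's Turing-complete encoding, in which the state and tape (or counters) of $\tm$ are represented by exponents in a prime factorisation and transitions are implemented by fractions manipulating those exponents. The novelty is that an \emph{arbitrary} starting integer $n_0$ must be interpreted as the input $n_0$ to $\tm$, so I would split $\aprg_\tm$ into a \emph{loading} phase---low-priority fractions that convert any $n_0$ into a canonical encoding of $\tm$'s initial configuration on input $n_0$---followed by a \emph{simulation} phase implementing $\tm$'s transitions via higher-priority fractions indexed by designated working primes.

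The main obstacle is to guarantee that the loading phase behaves correctly on \emph{every} positive integer, in particular on starting values whose prime factorisations contain primes also used by the simulation. A workable remedy is to reserve the working primes to lie above some threshold and to include cleanup fractions in the loading phase that strip out any problematic factors before re-encoding the resulting integer as a canonical initial tape configuration; the ordering of fractions then ensures that the loading phase completes before any simulation fraction can fire. A delicate point is that Fractran uses divisibility rather than strict residue-class matching to select a fraction, so the priority order has to be chosen carefully to keep the two phases disjoint. Once correctness of the loading and simulation phases is verified, the equivalence between uniform halting of $\aprg_\tm$ and totality of $\tm$ yields $\cpi{0}{2}$-hardness and, combined with the upper bound, $\cpi{0}{2}$-completeness.
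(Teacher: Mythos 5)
Your upper bound is fine and matches the paper's (largely implicit) membership argument: the inner predicate of $\forall n\,\exists i$ is decidable by simulation, so the problem is in $\cpi{0}{2}$. The gap is in the lower bound. The reduction you propose---from totality of Turing machines, interpreting an arbitrary starting value $n_0$ as the input $n_0$ to $\tm$---cannot be realized by any Fractran program, for a reason the paper explicitly flags before giving its own construction. A Fractran program is a finite list of fractions and therefore mentions only finitely many primes; a fraction $\anum/\aden$ applies to $n$ exactly when $\aden \mid n$, so prime factors of $n_0$ outside that finite set are invisible and inert. Hence $\aprg_\tm$ halts on $n_0$ if and only if it halts on $n_0\cdot P$ for every prime $P$ not occurring in the program, whereas $\tm$ may well halt on input $n_0$ but not on input $n_0\cdot P$. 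So the biconditional ``$\aprg_\tm$ halts on $n_0$ iff $\tm$ halts on input $n_0$'' is unachievable, and no loading phase can convert an arbitrary $n_0$ into an encoding of the initial configuration on input $n_0$: the program can read only the exponents of its designated primes, never the value of $n_0$ itself. Reserving working primes above a threshold does not help, since $n_0$ may contain arbitrarily large prime factors. The secondary device you rely on---cleanup fractions placed early in the priority order---also fails as stated: a fraction such as $1/p$ that strips a working prime $p$ remains applicable throughout the simulation whenever $p$ divides the current value, so it would destroy legitimate runs. The paper's fix is to let cleanup fractions fire only on products $p\cdot p'$ of control primes that are mutually exclusive in every legal encoding.

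The paper sidesteps both problems by reducing from the \emph{uniform} halting problem of Turing machines (termination on \emph{all} configurations, not just initial ones), which is $\cpi{0}{2}$-complete by Herman's theorem. Under the prime-exponent encoding, every assignment of exponents to the tape and state primes is---after the illegal combinations of control primes are removed---the code of \emph{some} configuration, so the quantifier ``for all $n_0\in\pnat$'' lines up with ``for all configurations'' without any loading phase; the remaining work is a case analysis showing that every $n>0$ either leads to termination or reaches the code of a genuine configuration. To repair your plan you would have to (a) take the TM input to be the exponent of a single designated prime rather than $n_0$ itself, (b) replace priority-based cleanup by fractions that are never applicable to legal states, and (c) still prove the ``garbage inputs eventually normalize or die'' lemma---at which point you have essentially reconstructed the paper's reduction from uniform halting.
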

A related result is obtained in~\cite{kurt:simo:2007} where it is shown
that the generalised Collatz problem (GCP) is $\cpi{0}{2}$-complete, that is,
the problem of deciding for a Collatz function $f$ 
whether for every integer $x > 0$ there exists $i \in \nat$ such that $\funap{f^i}{x} = 1$.
A Collatz function $f$ is a function $f \funin \nat \to \nat$ of the form:
\begin{align*}
  \funap{f}{n} = 
  \begin{cases}
     a_0 \cdot n + b_0, &\text{if $\congrmod{n}{0}{p}$} \\[-.5ex]
     \vdots & \vdots\\[-.5ex]
     a_{p-1} \cdot n + b_{p-1}, &\text{if $\congrmod{n}{p-1}{p}$}
  \end{cases}
\end{align*}
for some $p \in \nat$ and rational numbers $a_i$, $b_i$ 
such that $\funap{f}{n} \in \nat$ for all $n\in \nat$.

The result of~\cite{kurt:simo:2007} is an immediate corollary of Theorem~\ref{fractran:pi02}.
Every Fractran program $\aprg$ is a Collatz function $f'_\aprg$
where $f'_\aprg$ is obtained from $f_\aprg$ (see Definition~\ref{def:fractran:function})
by replacing undefined with $1$.
We obtain the above representation of Collatz functions simply
by choosing for $p$ the least common multiple of the denominators of the fractions of $\aprg$.
We call a Fractran program $\aprg$ \emph{trivially immortal}
if $\aprg$ contains a fraction with denominator 1 (an integer).
Then for all not trivially immortal $\aprg$, $\aprg$ halts on all inputs
if and only for all $x>0$ there exists $i\in\nat$ such that $\funap{f_\aprg^{\hspace{.08em}i}}{x} = 1$.
Using our result, this implies that GCP is $\cpi{0}{2}$-hard.

Theorem~\ref{fractran:pi02} is a strengthening of the result in~\cite{kurt:simo:2007} 
since Fractran programs are a strict subset of Collatz functions.
If Fractran programs are represented as Collatz functions directly,
for all $0 \le i < p$ it holds either $b_i = 0$, 
or $a_i = 0$ and $b_i = 1$.
Via such a translation Fractran programs are, e.g.,
not able to implement the famous Collatz function 
$\funap{C}{2n} = n$ and $\funap{C}{2n+1} = 6n+4$
(for all $n\in\nat$),
nor an easy function like
$\funap{f}{2n} = 2n+1$ and $\funap{f}{2n+1} = 2n$ 
(for all $n \in \nat$).

For the proof of Theorem~\ref{fractran:pi02} we devise a translation 
from Turing machines to Fractran programs
(\cite{kurt:simo:2007} uses register machines)
such that the resulting Fractran program halts on all positive integers ($n_0 \ge 1$)
if and only if the Turing machine is terminating on all configurations.
That is, we reduce the uniform halting problem of Turing machines
to the uniform halting problem of Fractran programs.

We briefly explain why we employ the uniform halting problem instead of 
the problem of totality (termination on all inputs) of Turing machines, also known as the initialised uniform halting problem.
When translating a Turing machine $\tm$ to a Fractran program $\fprog_\tm$,
start configurations (initialised configurations) are mapped to a subset $I_\tm \subseteq \nat$ of Fractran inputs.
Then from $\cpi{0}{2}$-hardness of the totality problem
one can conclude $\cpi{0}{2}$-hardness
of the question whether $\fprog_\tm$ terminates on all numbers from $I_\tm$.
But this does not imply that the uniform halting problem for Fractran programs is $\cpi{0}{2}$-hard
(termination on all natural numbers $n \in \nat$).
The numbers not in the target of the translation
could make the problem both harder as well as easier.
A situation where extending the domain of inputs makes the
problem easier is: local confluence of TRSs 
is $\cpi{0}{2}$-complete for the set of ground terms,
but only $\csig{0}{1}$-complete for the set of all terms~\cite{endr:geuv:zant:2009}.

To keep the translation as simple we restrict to unary Turing machines 
having only two symbols $\{0,1\}$ in their tape alphabet,
$0$ being the blank symbol.
\begin{definition}\normalfont\label{def:tm}
  A unary \emph{Turing machine $\tm$} 
  is a triple $\triple{\tmstates}{\tmstart}{\stmtrans}$,
  where $\tmstates$ is a finite set of states, $q_0 \in \tmstates$ the initial state,
  and $\stmtrans \funin \tmstates \times \tmsig \pto \tmstates \times \tmsig \times \{\tmL,\tmR\}$
  a (partial) \emph{transition function}.
  A \emph{configuration} of $\tm$ is a pair $\pair{\astate}{\stmtape}$
  consisting of a state $\astate \in \tmstates$
  and the tape content $\stmtape \funin \zz \to \tmsig$
  such that the support $\{n \in \zz \where \tmtape{n} \ne \stmblank\}$ is finite.
  The set of all configurations is denoted by $\tmconf{\tm}$.
  We define the relation $\tmstep$ on the set of configurations $\tmconf{\tm}$ as follows:
  $\pair{\astate}{\stmtape} \tmstep \pair{\astate'}{\stmtape'}$
  whenever:
  \begin{itemize}
   \item
     $\tmtrans{\astate}{\tmtape{0}} = \triple{\astate'}{f}{\tmL}$,
     $\funap{\stmtape'}{1} = f$ and $\myall{n \ne 0}{\funap{\stmtape'}{n+1} = \tmtape{n}}$, or
   \item
     $\tmtrans{\astate}{\tmtape{0}} = \triple{\astate'}{f}{\tmR}$,
     $\funap{\stmtape'}{-1} = f$ and $\myall{n \ne 0}{\funap{\stmtape'}{n-1} = \tmtape{n}}$.
  \end{itemize}
  We say that $\tm$ \emph{halts (or terminates) on a configuration $\pair{\astate}{\stmtape}$}
  if the configuration $\pair{\astate}{\stmtape}$ does not admit infinite $\tmstep$ rewrite sequences.
\end{definition}
The \emph{uniform halting problem} of Turing machines 
is the problem of deciding whether a given Turing machine $\tm$ halts 
on all (initial or intermediate) configurations.
The following theorem is a result of~\cite{herman:1971}:
\begin{theorem}
\label{thm:tm}
  The uniform halting problem for Turing machines is $\cpi{0}{2}$-complete.
\end{theorem}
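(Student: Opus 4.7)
The plan is to prove the two directions of $\cpi{0}{2}$-completeness separately. Membership in $\cpi{0}{2}$ is straightforward: uniform halting of $\tm$ is equivalent to the $\forall\exists$ statement $\myallin{c}{\tmconf{\tm}}{\myexin{n}{\nat}{\tm \text{ halts on } c \text{ within } n \text{ steps}}}$. Coding configurations as naturals and observing that bounded simulation is primitive recursive, the inner predicate is decidable, so this is a genuine $\cpi{0}{2}$ statement.

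For $\cpi{0}{2}$-hardness I would reduce from the totality problem for Turing machines, that is, the problem of deciding whether a given machine halts on every input started on an otherwise blank tape; this is a classical $\cpi{0}{2}$-complete problem via the standard arithmetical hierarchy results. Given $\tm$ whose totality we want to decide, I construct $\tm'$ with a two-phase control. In the \emph{reset phase}, $\tm'$ begins in an arbitrary configuration, performs a uniformly terminating clean-up routine that scans the finite support of its tape, extracts from it a natural number $n$, erases the tape, and rewrites a canonical initial configuration of $\tm$ on input $n$. In the \emph{simulation phase}, $\tm'$ then faithfully simulates $\tm$ on that input, halting exactly when $\tm$ does.

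The correctness argument splits cleanly along this phase boundary. If $\tm$ is total, every simulation terminates and, provided the reset phase is itself uniformly halting, $\tm'$ halts from every configuration. Conversely, if $\tm$ diverges on some input $n$, there is a starting configuration of $\tm'$ whose tape already encodes $n$ in canonical form and whose state is the entry point of the simulation phase; from it $\tm'$ directly feeds $n$ into the simulation and thereby witnesses non-uniform halting of $\tm'$. Hence $\tm'$ is uniformly halting if and only if $\tm$ is total.

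The hard part will be designing the reset phase so that it is \emph{uniformly} halting: every configuration of $\tm'$ sitting in a reset-phase state, including pathological intermediate configurations that no sane execution would ever reach, must still terminate. I would handle this by driving reset-phase control from a measure that strictly decreases on each step (for instance, the distance from the head to the rightmost non-blank cell, combined with a bounded counter over the finitely many auxiliary reset-states), so that the phase always ends within finitely many steps regardless of its starting configuration. Verifying that the value extracted by this phase ranges over all of $\nat$ as the reset-state starting configuration varies then completes the reduction.
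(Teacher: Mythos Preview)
The paper does not prove this theorem: it is stated as a known result of Herman~\cite{herman:1971}, with only the added remark that it carries over to unary machines by block-encoding the tape alphabet. Your proposal thus supplies an argument where the paper offers only a citation.

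Your reduction, however, has a real gap in the forward direction. You take care that the reset phase is uniformly halting, but the simulation phase inherits exactly the same problem and you do not address it. A configuration of $\tm'$ whose control state belongs to the simulation phase is, under a ``faithful'' step-by-step simulation, an arbitrary configuration of $\tm$ with states relabelled---not necessarily one reachable from an initial configuration on a blank tape. Totality of $\tm$ only guarantees halting from initial configurations; if $\tm$ is total but has some non-initial divergent configuration~$c$, then $\tm'$ started in the corresponding simulation-phase configuration diverges as well, and the equivalence ``$\tm$ total $\Leftrightarrow$ $\tm'$ uniformly halting'' breaks. The standard repair is to avoid direct simulation: keep the extracted input $n$ on a separate track and, for $t=0,1,2,\ldots$, re-simulate $\tm$ on input $n$ \emph{from scratch} for exactly $t$ steps, halting once $\tm$ does. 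Each bounded inner run terminates from any starting configuration, and after it the future behaviour of $\tm'$ is determined by the stored $n$ alone; totality of $\tm$ then genuinely implies uniform halting of $\tm'$.
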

This result carries over to unary Turing machines using a simulation based on a straightforward encoding
of tape symbols as blocks of zeros and ones (of equal length),
which are admissible configurations of unary Turing machines.

We now give a translation of Turing machines to Fractran programs.
Without loss of generality we restrict in the sequel to Turing machines 
$\tm = \triple{\tmstates}{\tmstart}{\stmtrans}$
for which $\tmtrans{\astate}{x} = \triple{\astate'}{s'}{d'}$ implies $\astate \ne \astate'$.
In case $\tm$ does not fulfil this condition then we can find an equivalent Turing machine 
$\tm' = \triple{\tmstates \cup \tmstates_\#}{\tmstart}{\stmtrans'}$ 
where $\tmstates_\# = \{\astate_\# \where \astate \in \tmstates\}$
and $\stmtrans'$ is defined by 
$\bfunap{\stmtrans'}{\astate}{x} = \triple{p_\#}{s}{d}$ and 
$\bfunap{\stmtrans'}{\astate_\#}{x} = \triple{p}{s}{d}$ 
for $\bfunap{\stmtrans}{\astate}{x} = \triple{p}{s}{d}$.
\begin{definition}\normalfont\label{tm2fractran}
  Let $\tm = \triple{\tmstates}{\tmstart}{\stmtrans}$ be a Turing machine.
  Let $\tapel$, $\mhead$, $\taper$, $\tapel'$, $\mhead'$, $\taper'$, $\movel{x}$, $\mover{x}$, $\mcopy{x}$
  and $p_q$ for every $q \in \tmstates$ and $x \in \{0,1\}$ be pairwise distinct prime numbers.
  The intuition behind these primes is:
  \begin{itemize}
    \item $\tapel$ and $\taper$ represent the tape left and right of the head, respectively,
    \item 	$\mhead$ is the tape symbol in the cell currently scanned by the tape head,
    \item $\tapel'$, $\mhead'$, $\taper'$ store temporary tape content (when moving the head),
    \item $\movel{x}$, $\mover{x}$ execute a left or right move of the head on the tape, respectively,
    \item $\mcopy{x}$ copies the temporary tape content back to the primary tape, and
    \item $p_q$ represent the states of the Turing machine.
  \end{itemize}
  The subscript $x \in \{0,1\}$ is used to have two primes for every action:
  in case an action $p$ takes more than one calculation step
  we cannot write $\frac{p\cdot \ldots }{p\cdot \ldots}$ 
  since then $p$ in numerator and denominator
  would cancel itself out.
  We define the Fractran program $\fractm$ to consist of the following fractions 
  (listed in program order):
  \begin{gather}
    \frac{1}{p \cdot p'} \quad\quad
    \begin{aligned}
    \text{for every } & p,p' \in \{\movel{0},\,\movel{1},\,\mover{0},\,\mover{1},\,\mcopy{0},\,\mcopy{1}\}\\[-.3ex]
    \text{every } & p,p' \in \{p_q \where q \in \tmstates\}
    \text{ and } p,p' \in \{\mhead,\mhead'\}
    \end{aligned}
    \label{frac:illegal}
  \end{gather}
  to get rid of illegal configurations,
  \begin{align}
    \frac{\movel{1-x} \cdot \tapel'}{\movel{x} \cdot \tapel^2} &&
    \frac{\movel{1-x} \cdot \taper'^2}{\movel{x} \cdot \taper} &&
    \frac{\movel{1-x} \cdot \taper'}{\movel{x} \cdot \mhead'} &&
    \frac{\movel{1-x} \cdot \mhead}{\movel{x} \cdot \tapel} &&
    \frac{\mcopy{0}}{\movel{x}}
    \label{frac:ml} 
  \end{align}
  with $x \in \{0,1\}$, for moving the head left on the tape,
  \begin{align}
    \frac{\mover{1-x} \cdot \taper'}{\mover{x} \cdot \taper^2} &&
    \frac{\mover{1-x} \cdot \tapel'^2}{\mover{x} \cdot \tapel} &&
    \frac{\mover{1-x} \cdot \tapel'}{\mover{x} \cdot \mhead'} &&
    \frac{\mover{1-x} \cdot \mhead}{\mover{x} \cdot \taper} &&
    \frac{\mcopy{0}}{\mover{x}}
    \label{frac:mr}
  \end{align}
  with $x \in \{0,1\}$, for moving the head right on the tape,
  \begin{align}
    \frac{\mcopy{1-x} \cdot \tapel}{\mcopy{x} \cdot \tapel'} &&
    \frac{\mcopy{1-x} \cdot \taper}{\mcopy{x} \cdot \taper'} &&
    \frac{1}{\mcopy{x}}
    \label{frac:copy}
  \end{align}
  with $x \in \{0,1\}$, for copying the temporary tape back to the primary tape,
  \begin{align}
    \frac{p_{q'} \cdot \mhead'^{s'} \cdot m_{d,0}}{p_{q} \cdot \mhead} 
    &&& \text{whenever $\tmtrans{\astate}{1} = \triple{\astate'}{s'}{d}$ } \label{frac:tmh}\\
    \frac{1}{p_{q} \cdot \mhead} \hspace{5.5ex}
    &&& \text{(for termination) for every $\astate \in \tmstates$ } \label{frac:term}\\
    \frac{p_{q'} \cdot \mhead'^{s'} \cdot m_{d,0}}{p_{q}} 
    &&& \text{whenever $\tmtrans{\astate}{0} = \triple{\astate'}{s'}{d}$} \label{frac:tmnoh}
  \end{align}
  for the transitions of the Turing machine.
  Whenever we use variables in the rules, e.g. $x \in \{0,1\}$, then it is to be understood
  that instances of the same rule are immediate successors in the sequence of fractions
  (the order of the instances among each other is not crucial).
\end{definition}

\begin{example}
  Let $\tm = \triple{\tmstates}{a_0}{\stmtrans}$ be a Turing machine 
  where $\tmstates = \{a_0,a_1,b\}$, and
  the transition function is defined by
  $\tmtrans{a_0}{0} = \triple{b}{1}{\tmR}$,
  $\tmtrans{a_1}{0} = \triple{b}{1}{\tmR}$,
  $\tmtrans{a_0}{1} = \triple{a_1}{0}{\tmR}$,
  $\tmtrans{a_1}{1} = \triple{a_0}{0}{\tmR}$,
  $\tmtrans{b}{1} = \triple{a_0}{0}{\tmR}$,
  and we leave $\tmtrans{b}{0}$ undefined.
  That is, $\tm$ moves to the right,
  converting zeros into ones and vice versa, 
  until it finds two consecutive zeros and terminates.
  Assume that $\tm$
  is started on the configuration $1b1001$,
  that is, the tape content $11001$ in state $b$ with the head located on the second $1$.
  In the Fractran program $\fractm$
  this corresponds to
  $n_0 = p_{b} \cdot \tapel^1 \cdot \mhead^1 \cdot \taper^{100}$
  as the start value
  where we represent the exponents in binary notation for better readability.
  Started on $n_0$ we obtain the following calculation in $\fractm$:
  \begin{gather*}
    \ored{p_{b}} \cdot \tapel^1 \cdot \ored{\mhead^{1}} \cdot \taper^{100} \quad\quad \text{(configuration $1b1001$)}\\
    \to_{\eqref{frac:tmh}} \ored{\mover{0}} \cdot p_{a_0} \cdot \tapel^1 \cdot \ored{\taper^{100}}
    \to_{(\ref{frac:mr};1^\text{st})}^2 \ored{\mover{0}} \cdot p_{a_0} \cdot \ored{\tapel^1} \cdot \taper'^{10}\\
    \to_{(\ref{frac:mr};2^\text{nd})} \ored{\mover{1}} \cdot p_{a_0} \cdot \tapel'^{10} \cdot \taper'^{10}
    \to_{(\ref{frac:mr};5^\text{th})} \ored{\mcopy{0}} \cdot p_{a_0} \cdot \ored{\tapel'^{10}} \cdot \ored{\taper'^{10}}\\
    \to_{(\ref{frac:copy};1^\text{st})}^2 \to_{(\ref{frac:copy};2^\text{nd})}^2 \to_{(\ref{frac:copy};3^\text{rd})}
      \ored{p_{a_0}} \cdot \tapel^{10} \cdot \taper^{10} \quad\quad \text{(configuration $10a001$)}\\
    \to_{\eqref{frac:tmnoh}} \ored{\mover{0}} \cdot p_{b} \cdot \tapel^{10} \cdot \mhead'^{1} \cdot \ored{\taper^{10}}
    \to_{(\ref{frac:mr};1^\text{st})} \ored{\mover{1}} \cdot p_{b} \cdot \ored{\tapel^{10}} \cdot \mhead'^{1} \cdot \taper'^{1}\\
    \to_{(\ref{frac:mr};2^\text{nd})}^2 \ored{\mover{1}} \cdot p_{b} \cdot \tapel'^{100} \cdot \ored{\mhead'^{1}} \cdot \taper'^{1}
    \to_{(\ref{frac:mr};3^\text{rd} +5^\text{th})}
     \ored{\mcopy{0}} \cdot p_{b} \cdot \ored{\tapel'^{101}} \cdot \ored{\taper'^{1}}\\
    \to_{(\ref{frac:copy};1^\text{st})}^5 \to_{(\ref{frac:copy};2^\text{nd})} \to_{(\ref{frac:copy};3^\text{rd})}
      p_{b} \cdot \tapel^{101} \cdot \taper^{1} \quad\quad \text{(configuration $101b01$)}
  \end{gather*}
  reaching a configuration where the Fractran program halts.
\end{example}

\begin{definition}\label{def:fracconf}\normalfont
  We translate configurations $c = \pair{\astate}{\stmtape}$ of Turing machines $\tm = \triple{\tmstates}{\tmstart}{\stmtrans}$
  to natural numbers (input values for Fractran programs).
  We reuse the notation of Definition~\ref{tm2fractran} and define:
  \begin{gather*}
  n_c = \tapel^L \cdot p_{\astate} \cdot \mhead^H \cdot \taper^R\\
  L = \sum_{i = 0}^\infty 2^i \cdot \tmtape{-1 - i} \quad\quad
  H = \tmtape{0} \quad\quad
  R = \sum_{i = 0}^\infty 2^i \cdot \tmtape{1+i}
  \end{gather*}
\end{definition}

\begin{lemma}\label{lem:tm2fractran:step}
   For every Turing machine $\tm$ and
   configurations $c_1$, $c_2$
   we have:
   \begin{enumerate}
      \item \label{tm2fractran:i} if $c_1 \tmstep c_2$ then $n_{c_1} \to_{\fractm}^* n_{c_2}$, and
      \item \label{tm2fractran:ii} if $c_1$ is a $\tmstep$ normal form then $n_{c_1} \to_{\fractm}^* \text{undefined}$.
   \end{enumerate}
\end{lemma}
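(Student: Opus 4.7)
The plan is to handle parts (i) and (ii) separately, both by careful case analysis on the TM transition function, following the pattern of the worked example. For part (i), fix $c_1 \tmstep c_2$ and let $\tmtrans{q}{H} = \triple{q'}{s'}{d}$ give the read symbol $H \in \{0,1\}$, written symbol $s'$, and direction $d \in \{\tmL, \tmR\}$. The computation $n_{c_1} \to_{\fractm}^* n_{c_2}$ decomposes into three consecutive phases. First, a single transition step applies~\eqref{frac:tmh} (if $H=1$) or~\eqref{frac:tmnoh} (if $H=0$); these are the first applicable fractions in $n_{c_1}$, since all earlier fractions require in their denominators either an action prime $\movel{x}$, $\mover{x}$, $\mcopy{x}$, a temporary prime ($\tapel'$, $\taper'$, $\mhead'$), or two distinct state primes---none of which is present in $n_{c_1}$. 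The result has the form $\tapel^L \cdot p_{q'} \cdot \mhead'^{s'} \cdot m_{d,0} \cdot \taper^R$.

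The movement phase is the heart of the argument. Treating $d = \tmL$ (the right-move case is symmetric via~\eqref{frac:mr}), the fractions in~\eqref{frac:ml} fire in the sequence forced by program order. Fraction~1 fires as long as $\tapel^2$ divides the current number, halving the exponent of $\tapel$ and building up $\tapel'$; this yields $\tapel^{L \mod 2} \cdot \tapel'^{\lfloor L/2 \rfloor}$. Then fraction~2 fires $R$ times, converting $\taper^R$ into $\taper'^{2R}$. Then fraction~3 fires once, consuming $\mhead'^{s'}$ and producing $\taper'^{s'}$, so $\taper'$ ends at exponent $2R + s'$. Then fraction~4 fires exactly when $L$ is odd, extracting the new head $\mhead^1$ from the remaining $\tapel$; otherwise it is skipped. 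Finally fraction~5 replaces the movement prime with $\mcopy{0}$, initiating the copy phase. The toggle $\movel{x}\leftrightarrow\movel{1-x}$ noted in Definition~\ref{tm2fractran} serves only to prevent cancellation in Fractran fractions, and plays no semantic role. The copy phase then uses~\eqref{frac:copy} to move $\tapel'$ back into $\tapel$, then $\taper'$ into $\taper$, and finally delete $\mcopy{x}$, leaving $\tapel^{\lfloor L/2 \rfloor} \cdot p_{q'} \cdot \mhead^{L \mod 2} \cdot \taper^{2R+s'}$. Matching this with Definition~\ref{def:fracconf} and the left-move tape update from Definition~\ref{def:tm}---$L' = \lfloor L/2 \rfloor$, $H' = L \mod 2$, $R' = 2R+s'$---confirms the result equals $n_{c_2}$.

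For part~(ii), let $c_1 = \pair{q}{\stmtape}$ be a $\tmstep$-normal form, so $\tmtrans{q}{\tmtape{0}}$ is undefined. If $\tmtape{0} = 1$, then $\mhead$ divides $n_{c_1}$ but no instance of~\eqref{frac:tmh} for $(q,1)$ is generated, and the first applicable rule is the termination fraction~\eqref{frac:term} for $q$, which consumes $p_q \cdot \mhead$ and leaves $\tapel^L \cdot \taper^R$; inspection shows no denominator in $\fractm$ divides this number, so $\fractm$ halts. If $\tmtape{0} = 0$, then $\mhead$ is absent, so none of~\eqref{frac:tmh}, \eqref{frac:term}, or the $\mhead$-containing rules in~\eqref{frac:illegal} apply; and the absence of a~\eqref{frac:tmnoh} rule for $q$ means $\fractm$ halts at $n_{c_1}$. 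The main obstacle is the bookkeeping of the movement phase in~(i): verifying that at each intermediate number the intended fraction is indeed the first applicable one. This reduces to maintaining the invariants that exactly one state prime and at most one action prime are present, and $\mhead$ and $\mhead'$ never coexist---which in particular rules out any~\eqref{frac:illegal} fraction from ever firing during correct simulation.
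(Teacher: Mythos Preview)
Your proposal is correct and follows essentially the same approach as the paper: a case split on the head symbol $H$ to select the transition fraction, a trace through the movement fractions~\eqref{frac:ml} (or~\eqref{frac:mr}) in program order establishing the tape arithmetic $L' = \lfloor L/2\rfloor$, $H' = L \bmod 2$, $R' = 2R + s'$, followed by the copy phase, and for part~(ii) the same two-case analysis on $\tmtape{0}$. Your explicit invariant (exactly one state prime, at most one action prime, $\mhead$ and $\mhead'$ never coexisting) is a welcome addition that the paper leaves implicit, and your verification that the result after~\eqref{frac:term} admits no further step is slightly more detailed than the paper's ``leads to termination''.
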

\begin{proof}
  Let $\tm = \triple{\tmstates}{\tmstart}{\stmtrans}$, and $c_i = \pair{\astate_i}{\stmtape_i}$ for $i \in \{1,2\}$.
  Then for $i \in \{1,2\}$ we have $n_{c_i} = \tapel^{L_i} \cdot p_{\astate_i} \cdot \mhead^{H_i} \cdot \taper^{R_i}$
  with $L_i$, $H_i$ and $R_i$ as $L$, $H$ and $R$ in Definition~\ref{def:fracconf}.
  For~\ref{tm2fractran:i} assume that $c_1 \tmstep c_2$.
  By the definition of $\tmstep$ there are two cases: the head moves left or right.
  We consider `moving left' (`moving right' is analogous).
  Then
  $\tmtrans{\astate_1}{\funap{\stmtape_1}{0}} = \triple{\astate_2}{s}{\tmL}$,
  $\funap{\stmtape_2}{1} = s$, $\myall{n \ne 0}{\funap{\stmtape_2}{n+1} = \funap{\stmtape_1}{n}}$.
  Therefore $L_2 = \floor{L_1/2}$, $H_2 = \funap{\stmtape_1}{-1} = L_1 \mod 2$ and $R_2 = s + 2\cdot R_1$.
  If $\funap{\stmtape_1}{0} = 1$ then $H_1 = 1$ and therefore the first fraction of $\fractm$ applicable to $n_{c_1}$ is
  $\sfrac{p_{q_2} \cdot \mhead'^{s} \cdot \movel{0}}{p_{q_1} \cdot \mhead}$;
  otherwise if $\funap{\stmtape_1}{0} = 0$ then $H_1 = 0$ and the fraction is 
  $\sfrac{p_{q_2} \cdot \mhead'^{s} \cdot \movel{0}}{p_{q_1}}$.
  Both cases result in (numbers in the justification refer to fractions of Definition~\ref{tm2fractran}): 
  \begin{align*}
    n_{c_1} &\to_{\fractm} p_{q_2} \cdot \tapel^{L_1} \cdot \mhead'^{s} \cdot \taper^{R_1} \cdot \movel{0}
            &&\text{(\eqref{frac:tmh} or \eqref{frac:tmnoh})}\\
            &\to_{\fractm}^* p_{q_2} \cdot \tapel^{L_1 \mod 2} \cdot \tapel'^{\floor{L_1/2}} \cdot \mhead'^{s} \cdot \taper^{R_1} \cdot \movel{x}
            &&\text{($1^{\text{st}}$ of \eqref{frac:ml})} \\
            &\to_{\fractm}^* p_{q_2} \cdot \tapel^{L_1 \mod 2} \cdot \tapel'^{\floor{L_1/2}} \cdot \mhead'^{s} \cdot \taper'^{2\cdot R_1} \cdot \movel{y}
            &&\text{($2^{\text{nd}}$ of \eqref{frac:ml})}\\
            &\to_{\fractm}^{\le 1} p_{q_2} \cdot \tapel^{L_1 \mod 2} \cdot \tapel'^{\floor{L_1/2}} \cdot \taper'^{s + 2\cdot R_1} \cdot \movel{y}
            &&\text{($3^{\text{rd}}$ of \eqref{frac:ml})}\\
            &\to_{\fractm}^{\le 1} p_{q_2} \cdot \tapel'^{\floor{L_1/2}} \cdot \mhead^{L_1 \mod 2} \cdot \taper'^{s + 2\cdot R_1} \cdot \movel{z}
            &&\text{($4^{\text{th}}$ of \eqref{frac:ml})}\\
            &\to_{\fractm}^{\le 1} p_{q_2} \cdot \tapel'^{\floor{L_1/2}} \cdot \mhead^{L_1 \mod 2} \cdot \taper'^{s + 2\cdot R_1} \cdot \mcopy{0}
            &&\text{($5^{\text{th}}$ of \eqref{frac:ml})}\\
            &\to_{\fractm}^{*} p_{q_2} \cdot \tapel^{\floor{L_1/2}} \cdot \mhead^{L_1 \mod 2} \cdot \taper^{s + 2\cdot R_1} = n_{c_2}
            &&\text{(\eqref{frac:copy})}
  \end{align*}
  For~\ref{tm2fractran:ii} assume that $c_1$ is a $\tmstep$ normal form.
  Then $\tmtrans{\astate_1}{\funap{\stmtape_1}{0}}$ is undefined.
  If $\funap{\stmtape_1}{0} = 1$ then $H_1 = 1$. 
  Since there is no matching fraction~\eqref{frac:tmh},
  the first applicable fraction is from~\eqref{frac:term}, which removes $p_{\astate_1}$ and thus leads to termination.
  If $\funap{\stmtape_1}{0} = 0$ then $H_1 = 0$ thus the only applicable fractions can be among~\eqref{frac:tmnoh}
  however there is no matching fraction since $\tmtrans{\astate_1}{\funap{\stmtape_1}{0}}$ is undefined.
  \qed
\end{proof}

Finally, we prove Theorem~\ref{fractran:pi02} which states that 
the uniform halting problem for Fractran programs is $\cpi{0}{2}$-complete.
\begin{proof}[Theorem~\ref{fractran:pi02}]
  For $\cpi{0}{2}$-hardness use the uniform halting problem of Turing machines which
  is $\cpi{0}{2}$-complete (see Theorem~\ref{thm:tm}).
  Let $\tm$ be a Turing machine.
  We prove that $\tm$ halts on all configurations if and only if $\fractm$ halts on all integers $n > 0$.
  If there is a configuration $c$ on which $\tm$ does not halt, then
  the Fractran program $\fractm$ does not halt on $n_c$ by Lemma~\ref{lem:tm2fractran:step} \ref{tm2fractran:i}.
  Thus assume that $\tm$ halts on all configurations.
  Let $C = \{\movel{0},\,\movel{1},\,\mover{0},\,\mover{1},\,\mcopy{0},\,\mcopy{1}\}$.
  Let $n > 0$ be arbitrary.
  By Lemma~\ref{lem:tm2fractran:step} it suffices to show
  that $n \to_{\fractm}^* \text{undefined}$ or $n \to_{\fractm}^* n_c$
  for some configuration $c$.
  By the first fractions of $\fractm$ we have $n \to_{\fractm}^* n'$
  such that $n'$ contains at most one prime factor from $C$, at most one $p_\astate$ and at most one $\{\mhead,\mhead'\}$
  (and none of these primes has an exponent $> 1$).

  Assume that $n'$ contains $\smovel$ or $\smovel$.
  The $\smovel$ or $\smovel$ fractions cannot be applied infinitely often in sequence
  since they decrease all exponents of $\tapel'$, $\taper'$ and $\mhead'$ to $0$, respectively.
  After $\smovel$ or $\smover$ there always follows $\mcopy{}$
  which then increases $\tapel$, $\taper$, but decreases $\tapel'$, $\taper'$ to the value $0$
  and afterwards $\mcopy{}$ `removes' itself. We call the reached configuration $n''$.
  Then $n''$ contains only the prime factors
  $\tapel$, $\taper$ with exponent $\ge 0$, $\mhead$ with exponent $\le 1$ and at most one of the $p_\astate$ with exponent $\le 1$.
  If $n''$ does not contain any $p_q$ then $n'' \to_{\fractm} \text{undefined}$.
  Otherwise there exists a configuration $c$ such that $n'' = n_c$.

  If $n'$ does not contain $\smovel$ or $\smover$, then
  neither the fractions~\eqref{frac:illegal}, nor~\eqref{frac:copy} ($\mcopy{}$),
  can be applied infinitely often in sequence. 
  Application of~\eqref{frac:term} removes
  the only remaining $p_{\astate}$ and thus leads to termination.
  Thus any non-terminating sequence contains an application of~\eqref{frac:tmh} or~\eqref{frac:tmnoh},
  which brings us back to the case of $n'$ containing $\smovel$ or $\smover$ which we have already analysed.
  \qed
\end{proof}

\section{What is Productivity?}\label{sec:productivity}
A program is productive if it evaluates to a finite or infinite constructor normal form.
This rather vague description leaves open several choices 
that can be made to obtain a more formal definition. 
We explore several definitions and determine the degree of undecidability for each of them.
See~\cite{endr:grab:hend:isih:klop:2007}
for more pointers to the literature on productivity.

The following is a productive specification of the (infinite) stream of zeros:
\begin{align*}
  \zeros &\to \strcns{\zer}{\zeros}
\end{align*}
Indeed, there exists only one maximal 
rewrite sequence from $\zeros$ and this ends in the infinite constructor normal form 
$\strcns{0}{\strcns{0}{\strcns{0}{\ldots}}}$.
Here and later we say that a rewrite sequence $\rho : t_0 \to t_1 \to t_2 \to \ldots$ 
\emph{ends in} a term $s$ if either $\rho$ is finite with its last term being $s$,
or $\rho$ is infinite and then $s$ is the limit of the sequence of terms~$t_i$, i.e.\
$s = \lim_{i\to\infty}t_i$.
We consider only rewrite sequences starting from finite terms, thus all
terms occurring in $\rho$ are finite. Nevertheless, the limit $s$ of the terms $t_i$
may be an infinite term.
Note that, if $\rho$ ends in a constructor normal form, 
then every finite prefix will be evaluated after finitely many steps.

The following is a slightly modified specification of the stream of zeros:
\begin{align*}
  \zeros &\to \strcns{\zer}{\funap{\strff{id}}{\zeros}} & \funap{\strff{id}}{\astr} &\to \astr
\end{align*}
This specification is considered productive as well,
although there are infinite rewrite sequences that do not even end in a normal form,
let alone in a constructor normal form: e.g.\ by 
unfolding $\zeros$ only we get the limit term 
$\strcns{0}{\funap{\strff{id}}{\strcns{0}{\funap{\strff{id}}{\strcns{0}{\funap{\strff{id}}{\ldots}}}}}}$.
In general, normal forms can only be reached by outermost-fair rewriting sequences.
A rewrite sequence $\rho : t_0 \to t_1 \to t_2 \to \ldots$ is \emph{outermost-fair}~\cite{terese:2003}
if there is no $t_n$ containing an outermost redex which remains an outermost redex infinitely long,
and which is never contracted.
For this reason it is natural to consider productivity of terms with respect to outermost-fair strategies.

What about stream specifications that admit rewrite sequences to constructor normal forms, 
but that also have divergent rewrite sequences:
\begin{align*}
  \maybe &\to \strcns{0}{\maybe} &
  \maybe &\to \sink &
  \sink &\to \sink
\end{align*}
This example illustrates that, for non-orthogonal stream specifications,
reachability of a constructor normal form depends on the evaluation strategy.
The term $\maybe$ is only productive
with respect to strategies that always apply the first rule.

For this reason we propose to think of productivity as
a property of individual terms 
\emph{with respect to} a given rewrite strategy.
This reflects the situation in functional programming,
where expressions are evaluated according to an inbuilt strategy.
These strategies are usually based on a form of outermost-needed rewriting 
with a priority order on the rules.

\subsection{Productivity with respect to Strategies}
  \label{sec:productivity:subsec:strategies}

For term rewriting systems (TRSs) \cite{terese:2003}
we now fix definitions of the notions of 
(history-free) strategy and history-aware strategy.
Examples for the latter notion are outermost-fair strategies,
which typically have to take history into account.

\begin{definition}\normalfont
  Let $\atrs$ be a TRS with rewrite relation $\redR{\atrs}$.

  A \emph{strategy for $\redR{\atrs}$} is a relation 
  $\sastrat\subseteq\sredR{\atrs}$ with the same normal forms
  as $\redR{\atrs}$.

  The \emph{history-aware rewrite relation $\sredh{\atrs}$ for $\atrs$}
  is the binary relation on $\ter{\asig} \times (\atrs \times \nat^*)^*$ 
  that is defined by:
  \begin{align*}
    \pair{s}{\ahist_s} \redh{\atrs} \pair{t}{\lstcns{\ahist_s}{\pair{\rho}{\apos}}} &\Longleftrightarrow
    s \red t \text{ via rule $\rho \in \atrs$ at position $\apos$}
    \punc.
  \end{align*}
  We identify $t \in \ter{\asig}$ with $\pair{t}{\posemp}$,
  and for $s,t \in \ter{\asig}$ we write $s \redh{\atrs} t$ 
  whenever $\pair{s}{\posemp} \redh{\atrs} \pair{t}{\ahist}$ for some history $h \in (\atrs \times \nat^*)^*$.
  A \emph{history-aware strategy for $\atrs$} is a strategy for
  $\sredh{\atrs}$.

  A strategy $\astrat$ is \emph{deterministic} if 
  $s \astrat t$ and $s \astrat t'$ implies $t = t'$. 
  A strategy $\astrat$ is \emph{computable} if the function mapping
  a term (a term/history pair) to its set of $\astrat$-successors
  is a total recursive function, after coding into natural numbers. 
\end{definition}

\begin{remark}
  Our definition of strategy for a rewrite relation
  follows \cite{toya:1992}.
  For abstract rewriting systems, 
  in which rewrite steps are first-class citizens,
  a definition of strategy is given in \cite[Ch.\,9]{terese:2003}.
  There, history-aware strategies for a TRS~$\atrs$ are defined in terms of 
  `labellings' for the `abstract rewriting system' underlying $\atrs$.
  While that approach is conceptually advantageous, our definition
  of history-aware strategy is equally expressive.  
\end{remark}

\begin{definition}\normalfont
  A \emph{(TRS-indexed) family of strategies $\stratfam$} 
  is a function that assigns to every TRS $\atrs$ 
  a set $\funap{\stratfam}{\atrs}$  of strategies for $\atrs$.
  We call such a family $\stratfam$ of strategies \emph{admissible} 
  if $\funap{\stratfam}{\atrs}$ 
  is non-empty for every orthogonal TRS $\atrs$.
\end{definition}

\newcommand{\strategy}{\leadsto}

Now we give the definition of productivity with respect to a strategy.
\begin{definition}\normalfont\label{def:productivity}
  A term $t$ is called \emph{productive with respect to a strategy $\strategy$}
  if all maximal $\strategy$ rewrite sequences starting from $t$ end in a constructor normal form.
\end{definition}
In the case of non-deterministic strategies we require here that all maximal rewrite sequences end in a constructor normal form.
Another possible choice could be to require only the existence of 
one such rewrite sequence (see Section~\ref{sec:productivity:subsec:strong}).
However, we think that productivity should be a practical notion.
Productivity of a term should entail that arbitrary finite parts of the constructor normal form can indeed be evaluated.
The mere requirement that a constructor normal form exists
leaves open the possibility that such a normal form 
cannot be approximated to every finite precision in a computable way.

For orthogonal TRSs outermost-fair (or fair) rewrite strategies
are the natural choice for investigating productivity
because
they guarantee to find (the unique) infinitary constructor normal form whenever it exists (see~\cite{terese:2003}).

Pairs and finite lists of natural numbers can be encoded using the well-known G\"{o}del encoding.
Likewise terms and finite TRSs over a countable set of variables
can be encoded.
A TRS is called \emph{finite} if its signature and set of rules
are finite.
In the sequel we restrict to (families of) computable strategies,
and assume that strategies are represented by
appropriate encodings.

Now we define the productivity problem in TRSs with respect to families
of computable strategies, and prove a $\cpi{0}{2}$-completeness result.

\paragraph{\textsc{Productivity Problem} \normalfont 
  with respect to a family $\stratfam$ of computable strategies}\ 

\emph{Instance}:
  Encodings of a finite TRS $\atrs$, 
  a strategy ${\strategy} \in \funap{\stratfam}{\atrs}$ 
  and a term $t$.

\emph{Answer}:
  `Yes' if $t$ is productive with respect to $\strategy$. 

\begin{theorem}\label{thm:strategy}
  For every family of admissible, computable strategies $\stratfam$,
  the productivity problem with respect to $\stratfam$ 
  is $\cpi{0}{2}$-complete.
\end{theorem}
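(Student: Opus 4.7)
The plan is to establish $\cpi{0}{2}$-membership and $\cpi{0}{2}$-hardness separately. For membership, I would use that any computable strategy $\strategy$ has finite branching---its successor set on each term (or term/history pair) is a finite, recursively computable set---so K\"onig's lemma applies to the reduction tree from $t$. Hence $t$ is productive with respect to $\strategy$ iff for every constructor depth $d \in \nat$ there exists $n \in \nat$ such that every $\strategy$-reduction from $t$ of length at most $n$ has its last term with a constructor prefix of depth $\geq d$, or has previously terminated in a full constructor normal form. For fixed $d, n$ this inner statement is decidable (enumerate the finitely many $\strategy$-paths of length $n$ and inspect their endpoints), yielding the $\forall \exists (\text{decidable})$ form that places productivity in $\cpi{0}{2}$.

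For $\cpi{0}{2}$-hardness I would reduce from the uniform halting problem for Turing machines, which is $\cpi{0}{2}$-complete by Theorem~\ref{thm:tm}. Given a deterministic TM~$\tm$, I would construct an orthogonal TRS $\atrs_\tm$ combining a standard stepwise TM simulator $\msf{run}$ (mirroring $\tm$'s transitions, reducing $\msf{run}(c)$ to a marker $\msf{halted}$ exactly when $\tm$ halts on $c$) with a stream-producing function
\begin{align*}
  g(n) &\to \msf{check}(\msf{run}(\msf{conf}(n)),\, n),
  &
  \msf{check}(\msf{halted},\, n) &\to \strcns{\bullet}{g(\suc{n})},
\end{align*}
where $\msf{conf}(n)$ decodes $n$ to the $n$-th configuration in a fixed enumeration. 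Admissibility of $\stratfam$ supplies some $\strategy \in \funap{\stratfam}{\atrs_\tm}$, and the reduction outputs the triple $(\atrs_\tm, \strategy, g(0))$.

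For correctness: if $\tm$ halts uniformly, each $\msf{run}(\msf{conf}(n))$ reduces in finitely many steps to $\msf{halted}$, enabling the $\msf{check}$-rule and emitting one $\strcns{\bullet}{\cdot}$ per $n$, so $g(0)$ builds the constructor normal form $\strcns{\bullet}{\strcns{\bullet}{\ldots}}$. Conversely, if $\tm$ diverges on some $\msf{conf}(n_0)$, then after $n_0$ cons's the reduction takes the shape $\strcns{\bullet}{\cdots\strcns{\bullet}{\msf{check}(\msf{run}(\msf{conf}(n_0)),\, n_0)}\cdots}$; the head $\msf{check}$ is blocked because its $\msf{halted}$-pattern is unmatched, all residual redexes lie inside the divergent $\msf{run}$-subterm, and $\strategy$---sharing normal forms with $\redR{\atrs_\tm}$---must keep reducing there forever without ever producing another constructor.

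The main obstacle is that admissibility guarantees only \emph{some} strategy for orthogonal TRSs, which need not be outermost-fair and might prefer eccentric reduction choices. The encoding has to be arranged so that at every reachable term of the reduction the entire residual redex set lies inside the current $\msf{run}$-subterm, forcing any $\strategy$ to simulate $\tm$ faithfully. This I would achieve by using a deterministic stepwise simulator (so that $\msf{run}(c)$ has exactly one applicable rule per non-halt step, mirroring $\tm$'s unique next transition) and by writing the $\msf{check}$- and $g$-rules as linear, non-overlapping constructor-pattern rules that cannot fire prematurely; orthogonality of the whole system then guarantees $\funap{\stratfam}{\atrs_\tm}$ is non-empty.
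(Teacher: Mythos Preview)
Your approach is essentially the same as the paper's: membership via the $\forall d\,\exists n$ characterization together with K\"onig's lemma on the finitely branching reduction tree, and hardness via an orthogonal TRS engineered so that every reachable term contains at most one redex occurrence, which makes the particular choice of strategy from $\funap{\stratfam}{\atrs}$ irrelevant. The only cosmetic difference is that the paper reduces from \emph{totality} of deterministic Turing machines (halting on all natural-number inputs, feeding $\funap{\ssuc^n}{\zer}$ directly to the simulator) rather than uniform halting on all configurations, which spares it your $\msf{conf}$ decoder---but the core technique is identical.
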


\begin{proof}{
  \newcommand{\from}{\funap{\msf{from}}}
  \newcommand{\run}{\funap{\msf{M}}}
  A (deterministic) Turing machine is called \emph{total} (encodes a total function $\nat \to \nat$) 
  if it halts on all inputs encoding natural numbers.
  The problem of deciding whether a deterministic Turing machine is \emph{total} is well-known to be $\cpi{0}{2}$-complete, see~\cite{hinman:1978}.
  Let $\tm$ be an arbitrary Turing machine.
  Employing the encoding of deterministic Turing machines into orthogonal TRSs from~\cite{jw:handbook},
  we can define a TRS $\atrs_\tm$ that simulates $\tm$ such that
  for every $n \in \nat$ it holds:
  every reduct of the term $\run{\funap{\ssuc^n}{\zer}}$ contains at most one redex occurrence,
  and the term $\run{\funap{\ssuc^n}{\zer}}$ rewrites to $\zer$ 
  if and only if the Turing machine $\tm$ halts on the input $n$.
  Note that the rewrite sequence starting from $\run{\funap{\ssuc^n}{\zer}}$ is deterministic.
  We extend the TRS $\atrs_\tm$ to a TRS $\atrs'_\tm$ with the following rule:
  \begin{align*}
    \funap{\strff{go}}{\zer,x} &\to \strcns{\zer}{\funap{\strff{go}}{\run{x},\suc{x}}}
  \end{align*}
  and choose the term $t = \funap{\strff{go}}{\zer,\zer}$.
  Then $\atrs'_\tm$ is orthogonal and 
  by construction every reduct of $t$ contains at most one redex occurrence
  (consequently all strategies for $\atrs$ coincide on every reduct of $t$).
  The term $t$ is productive
  if and only if
  $\run{\funap{\ssuc^n}{\zer}}$ rewrites to $\zer$ for every $n \in \nat$
  which in turn holds if and only if
  the Turing machine $\tm$ is total.
  This concludes $\cpi{0}{2}$-hardness.

  For $\cpi{0}{2}$-completeness let $\stratfam$ be a family of computable strategies,
  $\atrs$ a TRS,~${\strategy} \in \funap{\stratfam}{\atrs}$ and $t$ a term. 
  Then productivity of $t$ can be characterised as:
  \begin{equation}
    \tag{$\star$}\label{PI02form:thm:strategy}
    \begin{aligned}
      \myall{d \in \nat}{\myex{n \in \nat}{&\text{every $n$-step $\strategy$-reducts of $t$}}}\\[-0.5ex]
      &\text{is a constructor normal form up to depth $d$}
    \end{aligned}
  \end{equation}
  Since the strategy ${\strategy}$ is computable and finitely branching, 
  all $n$-step reducts of $t$ can be computed.
  Obviously, if the formula \eqref{PI02form:thm:strategy}
  holds, then $t$ is productive w.r.t.\ $\strategy$.
  Conversely, assume that $t$ is productive w.r.t.\ $\strategy$.
  For showing \eqref{PI02form:thm:strategy}, let $d\in\nat$ be arbitrary.
  By productivity of $t$ w.r.t.\ $\strategy$, on every path in the reduction 
  graph of $t$ w.r.t.\ $\strategy$ eventually a term with a constructor normal
  form up to depth $d$ is encountered. Since reduction graphs in TRSs
  always are finitely branching, Koenig's lemma implies that there 
  exists an $n\in\nat$ such that all terms on depth greater or equal to $n$ 
  in the reduction graph of $t$ are constructor prefixes of depth at least $d$. 
  Since $d$ was arbitrary, \eqref{PI02form:thm:strategy} has been established. 
  Because \eqref{PI02form:thm:strategy} is a $\cpi{0}{2}$-formula, 
  the productivity problem with respect to $\stratfam$
  also belongs to $\cpi{0}{2}$.
  \qed
}\end{proof}

\begin{corollary}
  In orthogonal TRSs, productivity is $\cpi{0}{2}$-complete 
  with respect to outermost-fair rewriting. 
\end{corollary}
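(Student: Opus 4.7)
The plan is to derive this corollary from Theorem~\ref{thm:strategy} by identifying outermost-fair rewriting with a concrete computable history-aware strategy.

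For $\cpi{0}{2}$-hardness, I would simply reuse the reduction constructed inside the proof of Theorem~\ref{thm:strategy}. The TRS $\atrs'_\tm$ is orthogonal by construction, and every reduct of $t = \funap{\strff{go}}{\zer,\zer}$ contains at most one redex occurrence. Consequently on every reduct of $t$ each strategy is trivially outermost-fair (there are no outermost redexes left unattended), so the equivalence established there between productivity of $t$ and totality of $\tm$ transfers verbatim to productivity with respect to outermost-fair rewriting.

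For membership in $\cpi{0}{2}$, I would pick one canonical computable history-aware outermost-fair strategy $\strategy_{\mrm{of}}$, uniformly in the orthogonal TRS $\atrs$. For instance: given a term $s$ and a history $\ahist_s$, let $\strategy_{\mrm{of}}$ contract the outermost redex position in $s$ that has been outermost for the greatest number of preceding steps without being contracted, breaking ties by lexicographic order of positions. Since the set of outermost redex positions of a finite term is finite and computable, this is a deterministic, computable strategy; it is outermost-fair by construction; and since every orthogonal TRS admits outermost redexes as long as it admits any, the induced family $\stratfam_{\mrm{of}}$ is admissible in the sense of the paper.

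The main step, and the obstacle on which everything hinges, is to argue that productivity \emph{with respect to outermost-fair rewriting} (i.e.\ along \emph{every} outermost-fair sequence) coincides, for orthogonal TRSs, with productivity with respect to the single strategy $\strategy_{\mrm{of}}$. This follows from the infinitary normalization theorem for orthogonal TRSs (see~\cite{terese:2003}), which ensures that all outermost-fair reductions of a term converge to the unique infinitary normal form whenever one exists; hence if some outermost-fair sequence ends in a constructor normal form, then they all do, and in particular the one produced by $\strategy_{\mrm{of}}$ does. Once this equivalence is in place, an application of Theorem~\ref{thm:strategy} to the family $\stratfam_{\mrm{of}}$ yields the desired membership in $\cpi{0}{2}$, completing the proof.
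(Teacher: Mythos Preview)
Your proposal is correct and rests on the same core idea as the paper: instantiate Theorem~\ref{thm:strategy} with a family of computable outermost-fair strategies for orthogonal TRSs. The paper's proof is a single sentence: it simply takes $\funap{\stratfam}{\atrs}$ to be the set of \emph{all} computable outermost-fair strategies when $\atrs$ is orthogonal (and $\setemp$ otherwise), observes that this family is admissible, and invokes Theorem~\ref{thm:strategy} directly.

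You take a slightly different route by fixing a \emph{single} canonical outermost-fair strategy $\strategy_{\mrm{of}}$ per TRS and then arguing, via the infinitary normalisation theorem for orthogonal systems, that productivity along $\strategy_{\mrm{of}}$ coincides with productivity along every outermost-fair sequence. This extra step is sound but not needed for the corollary as the paper states it: since the productivity problem of Theorem~\ref{thm:strategy} already takes the strategy as part of the instance, there is no need to collapse the family to a singleton or to invoke the normalisation theorem. What your detour buys is an explicit justification that the particular choice of outermost-fair strategy is immaterial for orthogonal TRSs---a point the paper only makes informally in the surrounding discussion---at the cost of some avoidable work.
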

\begin{proof}
  Apply Theorem~\ref{thm:strategy} to the family of strategies 
  that assigns to every orthogonal TRS $\atrs$
  the set of computable, outermost-fair rewriting strategies for $\atrs$,
  and $\setemp$ to non-orthogonal~TRSs.
\end{proof}

The definition of productivity with respect to computable strategies
reflects the situation in functional programming.
Nevertheless, we now investigate variants of this notion, and determine
their respective computational complexity.

\subsection{Strong Productivity}\label{sec:productivity:subsec:strong}

As already discussed, only outermost-fair rewrite sequences can reach a constructor normal form.
Dropping the fine tuning device `strategies', 
we obtain the following stricter notion of productivity.
\begin{definition}\normalfont\label{def:strongproductivity}
  A term $t$ is called \emph{strongly productive}
  if all maximal outermost-fair rewrite sequences starting from $t$ end in a constructor normal form.
\end{definition}
The definition requires all outermost-fair rewrite sequences to end in a constructor normal form,
including non-computable rewrite sequences.
This catapults 
productivity into a much higher class of undecidability:
$\cpi{1}{1}$, a class of the analytical hierarchy.
The analytical hierarchy continues the classification
of the arithmetical hierarchy using second order formulas.
The computational complexity of strong productivity therefore
exceeds the expressive power of first-order logic 
to define sets from recursive sets.

A well-known result of recursion theory states
that for a given computable relation ${>} \subseteq \nat \times \nat$
it is $\cpi{1}{1}$-hard to decide whether $>$ is well-founded, see~\cite{hinman:1978}.
Our proof is based on a construction from~\cite{endr:geuv:zant:2009}.
There a translation from deterministic Turing machines $\tm$ to TRSs $\roottrs$ (which we explain below)
together with a term $t_\tm$ 
is given such that: $t_\tm$ is root-terminating (i.e., $t_\tm$
admits no rewrite sequences containing an infinite number of root steps)
if and only if the binary relation $>_\tm$ encoded by $\tm$ is well-founded.
The TRS $\roottrs$ consists of rules
for simulating the Turing machine $\tm$ such that $\bfunap{\tm}{x}{y} \red^* \tmT$ iff $x >_\tm y$ holds
(which basically uses a standard encoding of deterministic Turing machines, see~\cite{jw:handbook}), a rule:
\begin{gather*}
  \tfunap{\msf{run}}{\tmT}{\picknok{x}}{\picknok{y}} \to \tfunap{\msf{run}}{\bfunap{\tm}{x}{y}}{\picknok{y}}{\pickn}
\end{gather*}
and rules for randomly generating a natural number:
\begin{align*}
\pickn &\to \fc{\pickn} &\quad
\pickn &\to \picknok{\tmzer{\tmiblank}} &\quad
\fc{\picknok{x}} &\to \picknok{\fs{x}}\punc.
\end{align*}
The term $t_\tm = \tfunap{\msf{run}}{\tmT}{\pickn}{\pickn}$ admits a rewrite sequence containing infinitely
many root steps if and only if $>_\tm$ is not well-founded.
More precisely, whenever there is an infinite decreasing sequence $x_1 >_\tm x_2 >_\tm x_3 >_\tm\ldots$,
then $t_\tm$ admits a rewrite sequence
$\tfunap{\msf{run}}{\tmT}{\pickn}{\pickn} 
 \red^* \tfunap{\msf{run}}{\tmT}{\picknok{x_1}}{\picknok{x_2}}
 \red \tfunap{\msf{run}}{\bfunap{\tm}{x_1}{x_2}}{\picknok{x_2}}{\pickn}
 \red^* \tfunap{\msf{run}}{\tmT}{\picknok{x_2}}{\picknok{x_3}}
 \red^* \ldots$.
We further note that $t_\tm$ and all of its reducts contain
exactly one occurrence of the symbol $\msf{run}$, namely at the root position.

\begin{theorem}\label{thm:omfair}
  The recognition problem for strong productivity is $\cpi{1}{1}$-complete.
\end{theorem}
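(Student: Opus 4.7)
The plan is to prove both $\cpi{1}{1}$ membership and $\cpi{1}{1}$-hardness for strong productivity. Membership follows by rewriting the definition as ``for every outermost-fair rewrite sequence $\rho$ starting from $t$, $\rho$ ends in a constructor normal form''; both outermost-fairness (every outermost-redex occurrence appearing in some $t_n$ is eventually either contracted or destroyed from above) and the end-in-constructor-NF condition (for every depth $d$ there is an $n$ such that all later reducts agree with the limit on a constructor prefix of depth $d$) are arithmetical properties of the coded sequence, so the single leading universal quantifier over sequences of steps, viewed as functions $\nat\to\nat$, yields a $\cpi{1}{1}$-formula.

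For hardness, I reduce from root-termination of $\roottrs$, which by~\cite{endr:geuv:zant:2009} is $\cpi{1}{1}$-hard, being equivalent to well-foundedness of $>_\tm$. Given $\tm$, extend $\roottrs$ with a fresh unary symbol $\strff{go}$, a stream constructor $\sstrcns$, a data constant $\bullet$, and the two rules
\[
  (\text{i})\;\strff{go}(\tfunap{\msf{run}}{\tmT}{\picknok{x}}{\picknok{y}}) \to \strff{go}(\tfunap{\msf{run}}{\bfunap{\tm}{x}{y}}{\picknok{y}}{\pickn}),
  \quad
  (\text{ii})\;\strff{go}(x) \to \strcns{\bullet}{\strff{go}(x)}.
\]
The target term is $t' = \strff{go}(t_\tm)$, and the claim is that $t'$ is strongly productive if and only if $>_\tm$ is well-founded. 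Because rule~(ii) matches every $\strff{go}(\cdot)$, each reduct of $t'$ contains a unique $\strff{go}$-occurrence which is the only outermost redex there, and outermost-fairness therefore forces it to be contracted infinitely often, using either rule~(i) or rule~(ii).

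For ``well-founded $\Rightarrow$ strongly productive'', note that successive rule-(i) contractions trace a descending chain $z_0 >_\tm z_1 >_\tm \cdots$ in $>_\tm$, exactly as for the root-step rule of $\roottrs$ whose shape is preserved inside the $\strff{go}$-wrapper. By well-foundedness any outermost-fair rewrite sequence can use rule~(i) only finitely often, so from some point on every contraction of the outermost $\strff{go}$ is a rule-(ii) step producing a $\bullet$; the sequence then converges to the infinite constructor normal form $\strcns{\bullet}{\strcns{\bullet}{\ldots}}$.

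For the converse, fix an infinite descending chain $z_0 >_\tm z_1 >_\tm \cdots$ and build a rewrite sequence from $t'$ that never fires rule~(ii) by: resolving the $\pickn$-subterms to $\picknok{z_k}$ in turn, reducing each $\bfunap{\tm}{z_{k-1}}{z_k}$ to $\tmT$ via the (non-outermost) Turing-machine rules inside the $\strff{go}$-wrapper, and applying rule~(i) to the root $\strff{go}$ at every opportunity. The unique outermost redex, the root $\strff{go}$, is contracted at every rule-(i) step, hence the sequence is outermost-fair; since rule~(ii) is never used, no $\bullet$ is produced and every reduct still carries $\strff{go}$ at the root, so the sequence does not end in a constructor normal form. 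The main obstacle in making this rigorous is precisely this outermost-fairness verification: the inner reductions of $\pickn$ and $\bfunap{\tm}{\cdot}{\cdot}$ are non-outermost while rule~(ii) is enabled at the root, and one must schedule them so that the root $\strff{go}$ is contracted often enough not to ``remain outermost infinitely long without being contracted''.
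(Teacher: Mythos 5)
Your proof is correct and follows essentially the same route as the paper: $\cpi{1}{1}$-hardness by reducing well-foundedness of $>_\tm$ through the root-termination TRS $\roottrs$ of~\cite{endr:geuv:zant:2009}, with an always-enabled constructor-producing rule so that outermost-fairness forces either infinitely many genuine root steps or convergence to $\strcns{\bullet}{\strcns{\bullet}{\ldots}}$, and membership via a single universal function quantifier over $\omega$-sequences with an arithmetical matrix. The only (inessential) difference is that the paper dispenses with the $\strff{go}$-wrapper and simply adds the rule $\funap{\msf{run}}{x,y,z} \to \strcns{0}{\funap{\msf{run}}{x,y,z}}$ directly, exploiting that $\msf{run}$ occurs exactly once, at the root, in every reduct of $t_\tm$.
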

\begin{proof}
  For the proof of $\cpi{1}{1}$-hardness, let $\tm$ be a deterministic
  Turing machine.
  We extend the TRS $\roottrs$ from~\cite{endr:geuv:zant:2009}
  with the rule $\funap{\msf{run}}{x,y,z} \to \strcns{0}{\funap{\msf{run}}{x,y,z}}$.
  As a consequence the term $\tfunap{\msf{run}}{\tmT}{\pickn}{\pickn}$ is strongly productive
  if and only if ${>_\tm}$ is well-founded (which is $\cpi{1}{1}$-hard to decide).
  If $>_\tm$ is not well-founded, then by the result in~\cite{endr:geuv:zant:2009}
  $t_\tm$ admits a rewrite sequence containing infinitely many root steps which obviously does not end in a constructor normal form.
  On the other hand if $>_\tm$ is well-founded, then $t_\tm$ admits only finitely many root steps with respect to $\roottrs$,
  and thus by outermost-fairness the freshly added rule has to be applied infinitely often.
  This concludes $\cpi{1}{1}$-hardness.

  Rewrite sequences of length $\omega$ can be represented by functions $r \funin \nat \to \nat$
  where $\funap{r}{n}$ represents the $n$-th term of the sequence together with the position and rule
  applied in step $n$. 
  Then for all $r$ (one universal $\forall r$ function quantifier) we have to check
  that $r$ converges towards a constructor normal form whenever $r$ is outermost-fair;
  this can be checked by a first order formula.
  We refer to~\cite{endr:geuv:zant:2009} for the details of the encoding.
  Hence strong productivity is in $\cpi{1}{1}$.
  \qed
\end{proof}

\subsection{Weak Productivity}%
  \label{sec:productivity:subsec:weak}

A natural counterpart to strong productivity is the notion of `weak productivity':
the existence of a rewrite sequence to a constructor normal form.
Here outermost-fairness does not need to be required, 
because rewrite sequences that reach normal forms
are always outermost-fair.

\begin{definition}\normalfont\label{def:weakproductivity}
  A term $t$ is called \emph{weakly productive}
  if there exists a rewrite sequence starting from $t$ that ends in a constructor normal form.
\end{definition}
For non-orthogonal TRSs the practical relevance of this definition is questionable
since, in the absence of a computable strategy to reach normal forms,
mere knowledge that a term $t$ is productive does typically not help
to find a constructor normal form of $t$. 
For orthogonal TRSs computable, normalising strategies exist,
but then also all of the variants of productivity coincide
(see Section~\ref{sec:productivity:subsec:discussion}).

\begin{theorem}\label{thm:weak}
  The recognition problem for weak productivity is $\csig{1}{1}$-complete.
\end{theorem}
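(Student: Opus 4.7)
The plan is to adapt the construction from the proof of Theorem~\ref{thm:omfair}, reducing this time from non-well-foundedness of a computable relation (the complement of the $\cpi{1}{1}$-complete well-foundedness problem, hence $\csig{1}{1}$-complete) instead of from well-foundedness itself. The guiding idea is to arrange a TRS in which witnesses to weak productivity correspond to infinite descending chains.

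For $\csig{1}{1}$-hardness, given a deterministic Turing machine $\tm$ with associated computable relation $>_\tm$, I would reuse the TRS $\roottrs$ from~\cite{endr:geuv:zant:2009} with the property that $t_\tm = \tfunap{\msf{run}}{\tmT}{\pickn}{\pickn}$ admits a rewrite sequence with infinitely many root steps if and only if $>_\tm$ is not well-founded. The key modification is to tie stream-constructor production to root steps: I replace the root rule of $\roottrs$ by
\[
  \tfunap{\msf{run}}{\tmT}{\picknok{x}}{\picknok{y}} \to \strcns{0}{\tfunap{\msf{run}}{\bfunap{\tm}{x}{y}}{\picknok{y}}{\pickn}}
\]
so that each root step contributes exactly one $0$ on top of the term, and call the resulting system $\atrs'_\tm$. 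If $>_\tm$ is not well-founded, an infinite descending chain $x_1 >_\tm x_2 >_\tm \ldots$ yields, via the construction of~\cite{endr:geuv:zant:2009}, a rewrite sequence of $t_\tm$ with infinitely many root steps which by construction converges to the constructor normal form $\strcns{0}{\strcns{0}{\ldots}}$; hence $t_\tm$ is weakly productive. Conversely, if $>_\tm$ is well-founded, every rewrite sequence from $t_\tm$ contains only finitely many root steps, so the limit term of any rewrite sequence has only a finite stream prefix on top of a residual $\msf{run}$-subterm, and no rewrite sequence ends in a constructor normal form.

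For membership in $\csig{1}{1}$, weak productivity of a term $t$ is expressible as: there exists a function $r \funin \nat \to \nat$ encoding a (finite or infinite) rewrite sequence from $t$ such that the sequence converges to a constructor normal form. The latter property is arithmetical (in fact $\cpi{0}{2}$): for every depth $d$, eventually all terms of the sequence agree with a constructor prefix up to depth $d$. Prefixing one existential function quantifier thus places the problem in $\csig{1}{1}$.

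The main obstacle is verifying that the modification does not open spurious reduction paths to a constructor normal form when $>_\tm$ is well-founded. Since the modified root rule is the only way to grow the stream prefix, each root step requires $\bfunap{\tm}{x}{y}$ to reduce to $\tmT$, which by the simulation property inherited from~\cite{endr:geuv:zant:2009} forces $x >_\tm y$. Hence any sequence of root steps of $\atrs'_\tm$ starting from $t_\tm$ encodes a strictly $>_\tm$-descending chain, so well-foundedness of $>_\tm$ entails root-termination of $\atrs'_\tm$ on $t_\tm$; only finitely many stream constructors can then ever appear at the top, ruling out weak productivity.
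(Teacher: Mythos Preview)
Your proposal is correct and follows essentially the same route as the paper: replace the $\msf{run}$-rule of $\roottrs$ by $\tfunap{\msf{run}}{\tmT}{\picknok{x}}{\picknok{y}} \to \strcns{0}{\tfunap{\msf{run}}{\bfunap{\tm}{x}{y}}{\picknok{y}}{\pickn}}$ so that weak productivity of $t_\tm$ coincides with non-well-foundedness of $>_\tm$, and handle membership via a single existential function quantifier over rewrite sequences. Your write-up in fact supplies more justification for the converse direction than the paper does.
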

\begin{proof}
  For the proof of $\csig{1}{1}$-hardness, let $\tm$ be a Turing machine.
  We exchange the rule
  $\tfunap{\msf{run}}{\tmT}{\picknok{x}}{\picknok{y}} \to \tfunap{\msf{run}}{\bfunap{\tm}{x}{y}}{\picknok{y}}{\pickn}$
  in the TRS $\roottrs$ from~\cite{endr:geuv:zant:2009}
  by 
  the rule $\tfunap{\msf{run}}{\tmT}{\picknok{x}}{\picknok{y}} \to \strcns{0}{\tfunap{\msf{run}}{\bfunap{\tm}{x}{y}}{\picknok{y}}{\pickn}}$.
  Then we obtain that the term $\tfunap{\msf{run}}{\tmT}{\pickn}{\pickn}$ is weakly productive
  if and only if ${>_\tm}$ is not well-founded (which is $\csig{1}{1}$-hard to decide).
  This concludes $\csig{1}{1}$-hardness.

  The remainder of the proof proceeds analogously
  to the proof of Theorem~\ref{thm:omfair},
  except that we now have an existential function quantifier $\exists r$ 
  to quantify over all rewrite sequences of length $\omega$.
  Hence weak productivity is in $\csig{1}{1}$.
  \qed
 \end{proof}

\subsection{Discussion}%
  \label{sec:productivity:subsec:discussion}

For orthogonal TRSs all of the variants of productivity coincide.
That is, if we restrict the first variant to computable outermost-fair strategies;
as already discussed, other strategies are not very reasonable.
For orthogonal TRSs there always exist computable outermost-fair strategies,
and whenever for a term there exists a constructor normal form, 
then it is unique and all outermost-fair rewrite sequences will end in this unique constructor normal form.

This raises the question whether uniqueness of the constructor normal forms
should be part of the definition of productivity.
We consider a specification of the stream of random bits:
\begin{align*}
  \random &\to \strcns{0}{\random} &
  \random &\to \strcns{1}{\random}
\end{align*}
Every rewrite sequence starting from $\random$ ends in a normal form. 
However, these normal forms are not unique. In fact, there are uncountably many of them.
We did not include uniqueness of normal forms
into the definition of productivity since
non-uniqueness only arises in non-orthogonal TRSs 
when using non-deterministic strategies.
However, one might want to require uniqueness of normal forms even in the case of non-orthogonal TRSs.

\begin{theorem}\label{thm:unique}
  The problem of determining, for TRSs~$\atrs$ and terms $t$ in $\atrs$, 
  whether $t$ has a unique (finite or infinite) normal form 
  is $\cpi{1}{1}$-complete.
\end{theorem}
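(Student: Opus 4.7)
The plan is to follow the template of Theorems~\ref{thm:omfair} and~\ref{thm:weak}, interpreting ``$t$ has a unique normal form'' in the standard rewriting sense that any two reachable (finite or infinite) constructor normal forms of $t$ coincide, so that a term without any reachable normal form vacuously satisfies the condition. This is the interpretation that places the problem at level $\cpi{1}{1}$.

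For $\cpi{1}{1}$-membership, I would encode maximal reductions of length at most $\omega$ as functions $r \funin \nat \to \nat$, reusing the encoding sketched in the proof of Theorem~\ref{thm:omfair}. The statement ``for any two such sequences $r_1,r_2$ starting from~$t$, if each converges to a constructor normal form then the two limits coincide'' is arithmetical in $r_1$ and $r_2$; prefixing the universal function quantifiers $\forall r_1\,\forall r_2$ yields a $\cpi{1}{1}$-formula.

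For $\cpi{1}{1}$-hardness, I would again reduce from well-foundedness of computable binary relations on $\nat$, using the TRS $\roottrs$ of~\cite{endr:geuv:zant:2009} that is employed in the proofs of Theorems~\ref{thm:omfair} and~\ref{thm:weak}. Given a Turing machine $\tm$ computing a relation $>_\tm$, I would replace the single root $\msf{run}$-rule by the two rules
\begin{align*}
  \tfunap{\msf{run}}{\tmT}{\picknok{x}}{\picknok{y}}
    &\to \strcns{0}{\tfunap{\msf{run}}{\bfunap{\tm}{x}{y}}{\picknok{y}}{\pickn}},\\
  \tfunap{\msf{run}}{\tmT}{\picknok{x}}{\picknok{y}}
    &\to \strcns{1}{\tfunap{\msf{run}}{\bfunap{\tm}{x}{y}}{\picknok{y}}{\pickn}},
\end{align*}
so that every root step nondeterministically emits a $0$ or a $1$, and keep $t_\tm = \tfunap{\msf{run}}{\tmT}{\pickn}{\pickn}$. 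If $>_\tm$ is not well-founded, then fixing any infinite descending chain $x_1 >_\tm x_2 >_\tm \cdots$ and any bit sequence $b \in \{0,1\}^\omega$, reducing the successive $\pickn$-occurrences to $\picknok{x_i}$ and at step $i$ choosing the rule that emits $b_i$ yields a reduction converging to $\strcns{b_1}{\strcns{b_2}{\ldots}}$; distinct $b$'s give distinct constructor normal forms, so uniqueness fails. Conversely, if $>_\tm$ is well-founded, the property of $\roottrs$ recalled before Theorem~\ref{thm:omfair} bounds the number of root $\msf{run}$-steps in any single reduction from $t_\tm$. Combined with the invariant (preserved by the modified rules) that every reduct of $t_\tm$ contains exactly one $\msf{run}$-symbol, sitting at the bottom of a $\strcns{\cdot}{\cdot}$-spine of already emitted constructor layers, this implies that after the last root step no new constructor layer is ever produced on top of $\msf{run}$, so no reduct of $t_\tm$ is a constructor normal form and uniqueness holds vacuously.

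The step that needs most care is this last one: one has to rule out that reductions inside the $\pickn$- and $\bfunap{\tm}{\cdot}{\cdot}$-subterms, which may continue indefinitely in the well-founded case, somehow contribute further constructor layers to the limit. The ``exactly one $\msf{run}$'' invariant is precisely what prevents this and hence pins the length of the constructor prefix to the bound dictated by well-foundedness of $>_\tm$.
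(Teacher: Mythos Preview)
Your $\cpi{1}{1}$-membership argument is fine and matches the paper's. The hardness argument, however, has a real gap.

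The theorem concerns \emph{normal forms}, not \emph{constructor} normal forms; your opening reinterpretation is not the standard one, and it is not the one proved in the paper. Under the correct reading your reduction breaks in the well-founded case. Even when $>_\tm$ is well-founded, $t_\tm$ has plenty of (non-constructor) normal forms in your system: iterating $\pickn \to \fc{\pickn}$ in one argument yields the infinite normal form $\tfunap{\msf{run}}{\tmT}{\fc{\fc{\fc{\ldots}}}}{\ldots}$ (no $\msf{run}$-rule matches, and the $\fc{}$-rule needs an $\msf{ok}$-argument); and after a single root step with, say, $x \not>_\tm y$, the simulated machine normalises to some $r \neq \tmT$ and you reach finite normal forms $\strcns{0}{\tfunap{\msf{run}}{r}{\picknok{y}}{\picknok{z}}}$ and $\strcns{1}{\tfunap{\msf{run}}{r}{\picknok{y}}{\picknok{z}}}$. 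So uniqueness already fails regardless of well-foundedness, and your ``vacuous uniqueness'' claim is false. Your invariant that $\msf{run}$ persists only excludes constructor normal forms; it does not make $\msf{run}$-headed terms into redexes.

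The paper's proof repairs exactly this point: it keeps the single $0$-emitting $\msf{run}$-rule from Theorem~\ref{thm:weak}, adds the self-loop $\tfunap{\msf{run}}{x}{y}{z} \to \tfunap{\msf{run}}{x}{y}{z}$ so that every term containing $\msf{run}$ is a redex (hence never a normal form), and introduces a fresh start symbol with two rules $\msf{start} \to \tfunap{\msf{run}}{\tmT}{\pickn}{\pickn}$ and $\msf{start} \to \strff{ones}$ together with $\strff{ones} \to \strcns{1}{\strff{ones}}$. Then $\msf{start}$ always has the normal form $\strcns{1}{\strcns{1}{\ldots}}$, and a second, different normal form $\strcns{0}{\strcns{0}{\ldots}}$ is reachable precisely when $>_\tm$ is not well-founded. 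Your two-bit idea would also work once you add the self-loop rule (and, if ``unique'' is read as ``exactly one'', an auxiliary branch guaranteeing at least one normal form); but as written, the construction does not establish hardness.
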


\begin{proof}
  For $\cpi{1}{1}$-hardness, we extend the TRS constructed in the proof of Theorem~\ref{thm:weak} by the rules:
  $\msf{start} \to \tfunap{\msf{run}}{\tmT}{\pickn}{\pickn}$,
  $\tfunap{\msf{run}}{x}{y}{z} \to \tfunap{\msf{run}}{x}{y}{z}$,
  $\msf{start} \to \strff{ones}$, and
  $\strff{ones} \to \strcns{1}{\strff{ones}}$.
  Then $\msf{start}$ has a unique normal form 
  if and only if $>_\tm$ is well-founded.
  For $\cpi{1}{1}$-completeness, we observe that
  the property can be characterised by a $\cpi{1}{1}$-formula:
  we quantify over two infinite rewrite sequences, 
  and, in case both of them end in a normal form, we compare them.
  Note that consecutive universal quantifiers can be compressed into one. \qed
\end{proof}

Let us consider the impact on computational complexity 
of taking up the condition of uniqueness of normal forms
into the definition of productivity.
Including uniqueness of normal forms without considering the strategy
would increase the complexity of productivity with respect to a family of strategies to $\cpi{1}{1}$.
However, we think that doing so would be contrary to the spirit 
of the notion of productivity.
Uniqueness of normal forms should only be required 
for the normal forms
reachable by the given (non-deterministic) strategy.
But then the complexity of productivity remains unchanged,
$\cpi{0}{2}$-complete.
The complexity of strong productivity
remains unaltered, $\cpi{1}{1}$-complete,
when including uniqueness of normal forms.
However, the degree of undecidability of weak productivity increases.
From the proofs of Theorems~\ref{thm:weak} and~\ref{thm:unique}
it follows that the property would then both be
$\csig{1}{1}$-hard and $\cpi{1}{1}$-hard,
then being in $\cdel{1}{1}$.

\section{Productivity for Lazy Stream Specifications is $\cpi{0}{2}$}\label{sec:progs2specs}

In this section we strengthen the undecidability result of Theorem~\ref{thm:strategy}
by showing that the productivity problem is $\cpi{0}{2}$-complete
already for a very simple format of stream specifications,
namely the lazy stream format (LSF) introduced on page~\pageref{page:LSF}.
We do so by giving a translation from Fractran programs into LSF
and applying Theorem~\ref{fractran:pi02}.

\begin{definition}\normalfont
  Let $\aprg = \prglist{\iafrac{1},\ldots,\iafrac{k}}$ be a Fractran program.
  Let $d$ be the least common multiple of the denominators of $\aprg$, 
  that is, $d \defdby  \lcm{\iaden{1},\ldots,\iaden{k}}$.
  Then for $n = 1,\ldots,d$ define
  $\anum_n' = \ianum{i} \cdot (d / \iaden{i})$ and $\aoff_n = n\cdot\iafrac{i}$
  where $\iafrac{i}$ is the first fraction of $\aprg$ such that $n\cdot\iafrac{i}$ is an integer,
  and we let $\anum_n'$ and $\aoff_n$ be undefined if no such fraction exists.
  Then, the \emph{stream specification induced by $\aprg$}
  is a term rewriting system $\mcl{R}_\aprg = \pair{\iasig{\aprg}}{\iarul{\aprg}}$
  with:
  \[
   \iasig{\aprg} 
   = 
   \{ \pebble , \,\sstrcns\, , \shead, \stail , \szipn{d} , \transl{\aprg} \} 
   \cup
   \{ \smodn{\anum_n'} \where \text{$\anum_n'$ is defined} \}
  \]
  and with $\iarul{\aprg}$ consisting of the following rules:
  \begin{gather*}
    \transl{\aprg} \to \zipn{d}{\translarg{1},\ldots,\translarg{d}}, 
    \text{ where, for $1 \le n \le d$, $\translarg{n}$ is shorthand for:} \\
    \translarg{n} =
    \begin{cases}
      \modn{\anum_n'}{\tailn{\aoff_n-1}{\transl{\aprg}}} 
      &\text{if $\anum_n'$ is defined,}\\
      \strcns{\pebble}{\modn{d}{\tailn{n-1}{\transl{\aprg}}}} 
      &\text{if $\anum_n'$ is undefined.}
    \end{cases}
    \\
    \begin{aligned}
    \head{\strcns{x}{\astr}} &\to x 
    &
    \modn{k}{\astr} &\to \strcns{\head{\astr}}{\modn{k}{\tailn{k}{\astr}}} 
    \\
    \tail{\strcns{x}{\astr}} &\to \astr
    &\quad
    \zipn{d}{\iastr{1},\iastr{2}\ldots,\iastr{d}}
    &\to \strcns{\head{\iastr{1}}}{\zipn{d}{\iastr{2},\ldots,\iastr{d},\tail{\iastr{1}}}}
    \end{aligned}
  \end{gather*}
  where $x$, $\astr$, $\iastr{i}$ are variables.\footnote[1]{%
  Note that $\modn{d}{\tailn{n-1}{\zipn{d}{\translarg{1},\ldots,\translarg{d}}}}$ equals $T_n$,
  and so, in case $\anum_n'$ is undefined, we just have 
  $\translarg{n} = \strcns{\pebble}{\translarg{n}}$.
  In order to have the simplest TRS possible (for the purpose at hand),
  we did not want to use an extra symbol $\strff{(\pebble)}$ 
  and rule $\strff{(\pebble)} \to \strcns{\pebble}{\strff{(\pebble)}}$.}
\end{definition}
The rule for $\smodn{n}$ defines a stream function 
which takes from a given stream $\astr$
all elements $\funap{\astr}{i}$ with $\congrmod{i}{0}{n}$, 
and results in a stream consisting of those elements in the original order.
As we only need rules $\smodn{\anum_n'}$ whenever $\anum_n'$ is defined
we need $d$ such rules at most.

\newcommand{\mo}[1]{\funap{\varphi}{#1}}
If $\anum_{n}'$ is undefined then it should be understood that $m \cdot \anum_{n}'$ is undefined.
For $n \in \nat$ let $\mo{n}$ denote the number from $\{1,\ldots,d\}$ with $\congrmod{n}{\mo{n}}{d}$.
\begin{lemma}\label{lem:pn}
  For every $n > 0$ we have 
  $\funap{f_\aprg}{n} = \floor{(n-1)/d} \cdot \anum_{\mo{n}}' + \aoff_{\mo{n}}$.
\end{lemma}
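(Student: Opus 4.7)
The plan is to reduce the computation of $f_\aprg(n)$ for arbitrary $n$ to the definition of $a'_m$, $b_m$ for the residue $m = \mo{n} \in \{1,\ldots,d\}$, using the fact that $d$ is a common multiple of all denominators.

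First I would set up the decomposition $n = q\cdot d + m$ with $m = \mo{n}$. A small calculation shows that $q = \floor{(n-1)/d}$: writing $n = q'd + r$ with $1 \le r \le d$, one sees $r = \mo{n}$ and, since $n-1 = q'd + (r-1)$ with $0 \le r-1 \le d-1$, that $\floor{(n-1)/d} = q' = q$. Next, because $d = \lcm{\iaden{1},\ldots,\iaden{k}}$, for every fraction $\iafrac{j}$ of $\aprg$ the product $q\cdot d\cdot\iafrac{j} = q\cdot\ianum{j}\cdot(d/\iaden{j})$ is an integer. Hence
\[
  n\cdot\iafrac{j} \;=\; q\cdot d\cdot\iafrac{j} \;+\; m\cdot\iafrac{j}
\]
is an integer if and only if $m\cdot\iafrac{j}$ is an integer.

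From this equivalence I get that the first fraction $\iafrac{i}$ making $n\cdot\iafrac{i}$ integral coincides with the first fraction making $m\cdot\iafrac{i}$ integral; and similarly, such a fraction exists for $n$ iff it exists for $m$. In the existence case, $\anum'_m = \ianum{i}\cdot(d/\iaden{i}) = d\cdot\iafrac{i}$ and $\aoff_m = m\cdot\iafrac{i}$ by definition, so
\[
  \funap{f_\aprg}{n}
  \;=\; n\cdot\iafrac{i}
  \;=\; q\cdot(d\cdot\iafrac{i}) + m\cdot\iafrac{i}
  \;=\; q\cdot\anum'_{\mo{n}} + \aoff_{\mo{n}}
  \;=\; \floor{(n-1)/d}\cdot\anum'_{\mo{n}} + \aoff_{\mo{n}}\punc,
\]
which is the claim. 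In the non-existence case both sides are undefined, matching the convention stated just before the lemma that $m\cdot\anum'_{\mo{n}}$ is undefined whenever $\anum'_{\mo{n}}$ is.

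There is no real obstacle here; the only point requiring care is the bookkeeping of the boundary case $\mo{n} = d$, which is handled uniformly by using $q = \floor{(n-1)/d}$ rather than $\floor{n/d}$, and the consistent treatment of the undefined case via the equivalence derived from $\iaden{j}\mid d$.
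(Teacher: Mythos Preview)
Your proof is correct and follows essentially the same approach as the paper's: both establish the equivalence $n\cdot\iafrac{j}\in\nat \iff \mo{n}\cdot\iafrac{j}\in\nat$ from $\iaden{j}\mid d$, use the decomposition $n=\floor{(n-1)/d}\cdot d+\mo{n}$, and then compute $\funap{f_\aprg}{n}$ directly. Your version is slightly more explicit about the boundary case $\mo{n}=d$ and about why $q=\floor{(n-1)/d}$, but there is no substantive difference.
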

\begin{proof}
  Let $n > 0$.
  For every $i \in \{1,\ldots,k\}$ we have 
  $n \cdot \iafrac{i} \in \nat$
  if and only if
  $\mo{n} \cdot \iafrac{i} \in \nat$, since $n \equiv \mo{n} \mod d$
  and $d$ is a multiple of $\iaden{i}$.
  Assume that $\funap{f_\aprg}{n}$ is defined. Then
  $\funap{f_\aprg}{n} = n \cdot \anum_{\mo{n}}'/d 
   = (\floor{(n-1)/d} \cdot d + ((n-1) \mod d) +1)\cdot \anum_{\mo{n}}'/d$
  $= \floor{(n-1)/d} \cdot \anum_{\mo{n}}' + \mo{n} \cdot \ianum{i} / \iaden{i}
   = \floor{(n-1)/d} \cdot \anum_{\mo{n}}' + \aoff_{\mo{n}}$.
  Otherwise whenever $\funap{f_\aprg}{n}$ is undefined then $\anum_{\mo{n}}'$ is undefined.
  \qed
\end{proof}

\begin{lemma}\label{lem:red:fractran:specs}
  Let $\aprg$ be a Fractran program.
  Then $\mcl{R}_\aprg$ is productive for $\transl{\aprg}$
  if and only if $\aprg$ is terminating on all integers $n > 0$.
\end{lemma}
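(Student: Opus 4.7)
The plan is to establish a tight positional correspondence between Fractran computation steps and rewriting of $\transl{\aprg}$. First I would verify that the TRS $\mcl{R}_\aprg$ is orthogonal (its rules have pairwise distinct root symbols and are left-linear), so by the results of Section~\ref{sec:productivity} it suffices to work with outermost-fair rewriting, and productivity is equivalent to reducibility to a (necessarily unique) infinite constructor normal form. Since $\pebble$ is the only data constant, that normal form, if it exists, must be $\strcns{\pebble}{\strcns{\pebble}{\ldots}}$.

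The technical core is the following recurrence on positions. For $k \geq 0$, set $n = (k \mod d) + 1$ and $m = \floor{k/d}$. Then a bounded number of outermost-fair rewrite steps, applied in context, reduce the subterm responsible for the $k$-th stream element of $\transl{\aprg}$ either to $\pebble$ (when $\anum_n'$ is undefined, in which case $\translarg{n}$ unfolds to $\strcns{\pebble}{\translarg{n}}$ by the footnote observation) or to the subterm responsible for position $m \cdot \anum_n' + \aoff_n - 1$. By Lemma~\ref{lem:pn} this new position equals $f_\aprg(k+1) - 1$, so producing the $k$-th constructor of $\transl{\aprg}$ amounts to simulating, one step at a time, the Fractran trajectory starting at $k+1$ until it halts.

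For the backward direction, assume $\aprg$ halts on every $n > 0$. Fix $k \geq 0$ and let $\ell$ be the length of the Fractran trajectory from $k+1$. By induction on $\ell$, using the recurrence, finitely many outermost-fair steps expose the $k$-th $\sstrcns$ above a $\pebble$ in the evolving reduct of $\transl{\aprg}$. Since this holds for every $k$, any outermost-fair rewrite sequence converges to $\strcns{\pebble}{\strcns{\pebble}{\ldots}}$, so $\transl{\aprg}$ is productive. Conversely, if some $n_0 > 0$ admits an infinite Fractran trajectory $n_0 \to f_\aprg(n_0) \to \ldots$, iterating the recurrence shows that the subterm responsible for position $n_0 - 1$ of $\transl{\aprg}$ never reduces to a term with a $\sstrcns$ at its head, contradicting productivity.

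The main obstacle will be the contextual bookkeeping in the recurrence: the relevant redexes sit inside a large $\szipn{d}$ application and nested $\ssmodn$, $\stail$, and $\transl{\aprg}$ subterms, so one must check that rewriting these outer operators neither disturbs the positional correspondence nor blocks further progress, and that each simulated Fractran step costs only finitely many rewrite steps. A clean presentation would isolate a single reduction lemma of the form ``starting from $\transl{\aprg}$, after finitely many outermost-fair steps a $\sstrcns$ constructor is exposed at depth $k$ if and only if the Fractran trajectory from $k+1$ is finite'', in the spirit of Lemma~\ref{lem:tm2fractran:step}, from which the two directions of the equivalence follow immediately.
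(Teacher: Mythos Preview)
Your proposal is correct and follows essentially the same approach as the paper: both establish the positional recurrence $k \mapsto f_\aprg(k+1) - 1$ via Lemma~\ref{lem:pn}, reducing productivity of $\transl{\aprg}$ at position $k$ to termination of the Fractran trajectory starting from $k+1$. The paper makes this concrete by working with the abbreviation $\head{\tailn{n}{\transl{\aprg}}}$ and spelling out the rewrite sequences through $\szipn{d}$ and $\smodn{}$ explicitly, and it handles the undefined case by an inner induction on $n$ (reducing position $n$ to position $n-d$) rather than via the footnote observation you cite, but these are only presentational differences.
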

\begin{proof}
  \newcommand{\tl}[2]{\funap{#1}{#2}}
  Let $\tl{\astr}{n}$ be shorthand for $\head{\tailn{n}{\astr}}$.
  It suffices to show for all $n \in \nat$:
  $\tl{\transl{\aprg}}{n} \to^* \bullet$ 
  if and only if
  $\aprg$ halts on $n$.
  For this purpose we show
  $\tl{\transl{\aprg}}{n} \to^+ \bullet$
  whenever $\funap{f_\aprg}{n+1}$ is undefined,
  and $\tl{\transl{\aprg}}{n} \to^+ \tl{\transl{\aprg}}{\funap{f_\aprg}{n+1}-1}$, otherwise.
  We have $\tl{\transl{\aprg}}{n} \to^* \tl{\translarg{\mo{n+1}}}{\floor{n/d}}$.

  Assume that $\funap{f_\aprg}{n+1}$ is undefined.
  By Lemma~\ref{lem:pn} $\anum_{\mo{n+1}}'$ is undefined, thus
  thus $\tl{\transl{\aprg}}{n} \to^* \bullet$ whenever $\floor{n/d} = 0$, 
  and otherwise we have:
  \begin{align*}
    \tl{\transl{\aprg}}{n} \to^* \tl{\translarg{\mo{n+1}}}{\floor{n/d}} 
      &\to^* \tl{\modn{d}{\tailn{\mo{n+1}-1}{\transl{\aprg}}}}{\floor{n/d}-1}
      \to^* \tl{\transl{\aprg}}{n'}
  \end{align*}
  where $n' = (\floor{n/d}-1)\cdot d + \mo{n+1}-1 = n-d$.
  Clearly $\congrmod{n}{n'}{d}$, and then 
  $\tl{\transl{\aprg}}{n} \to^* \bullet$ follows by induction on $n$.

  Assume that $\funap{f_\aprg}{n+1}$ is defined.
  By Lemma~\ref{lem:pn} $\anum_{\mo{n+1}}'$ is defined and:
  \begin{align*}
    \tl{\transl{\aprg}}{n} &\to^* \tl{\translarg{\mo{n+1}}}{\floor{n/d}}
      \to^* \tl{\modn{\anum_{\mo{n+1}}'}{\tailn{\aoff_{\mo{n+1}}-1}{\transl{\aprg}}}}{\floor{n/d}}
  \end{align*}
  and hence $\tl{\transl{\aprg}}{n} \to^+ \tl{\transl{\aprg}}{n'}$ 
  with $n' = \floor{n/d} \cdot \anum_{\mo{n+1}}' + \aoff_{\mo{n+1}}-1$.
  Then we have $n' = \funap{f_\aprg}{n+1} - 1$ by Lemma~\ref{lem:pn}.
  \qed
\end{proof}

\begin{theorem}\label{thm:prod:specs}
  The restriction of the productivity problem to stream specifications
  induced by Fractran programs and outermost-fair strategies 
  is $\cpi{0}{2}$-complete. 
\end{theorem}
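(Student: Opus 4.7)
The strategy is to combine the reduction already set up in Lemma~\ref{lem:red:fractran:specs} with the $\cpi{0}{2}$-completeness of uniform halting for Fractran programs (Theorem~\ref{fractran:pi02}) for the hardness part, and to invoke the Corollary of Theorem~\ref{thm:strategy} for the upper bound. The main bridge that needs to be made explicit is that the notion of productivity used in Lemma~\ref{lem:red:fractran:specs}, which is phrased in terms of reachability of the constructor stream $\strcns{\pebble}{\strcns{\pebble}{\ldots}}$, coincides for the TRSs $\mcl{R}_{\aprg}$ with the outermost-fair notion from Definition~\ref{def:productivity}.

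For $\cpi{0}{2}$-hardness I would argue as follows. The map $\aprg \mapsto \mcl{R}_{\aprg}$ is visibly computable: it requires only the arithmetic needed to produce $d$, the numbers $\anum_n'$ and $\aoff_n$ for $1\le n\le d$, and to assemble the finite set of rules. Next I would verify that each $\mcl{R}_\aprg$ is orthogonal. Left-linearity is immediate from the shape of the rules. Non-overlap follows since the root symbols of the defining rules for $\transl{\aprg}$, $\shead$, $\stail$, $\szipn{d}$, and each $\smodn{\anum_n'}$ are pairwise distinct, no rule has a left-hand side that is a non-variable proper subterm of another, and no left-hand side unifies with a non-variable subterm of another at a non-root position (the only candidate overlaps would involve $\shead$, $\stail$ or $\smodn{k}$ inside the $\transl{\aprg}$-rule, but those subterms are never of the form $\strcns{x}{\astr}$). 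Orthogonality then guarantees that every outermost-fair rewrite sequence from $\transl{\aprg}$ converges to the (unique) infinitary normal form whenever such a normal form exists, so productivity in the sense of Lemma~\ref{lem:red:fractran:specs} coincides with productivity in the sense of Definition~\ref{def:productivity} with respect to any computable outermost-fair strategy. Composing with Theorem~\ref{fractran:pi02} yields a many-one reduction from a $\cpi{0}{2}$-hard problem.

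For membership in $\cpi{0}{2}$ I would simply apply the Corollary to Theorem~\ref{thm:strategy}: each $\mcl{R}_\aprg$ is orthogonal (as argued above), and a canonical computable outermost-fair strategy exists (e.g.\ leftmost-outermost, which is outermost-fair in orthogonal TRSs and trivially computable on finite TRSs). Productivity of $\transl{\aprg}$ with respect to that strategy is $\cpi{0}{2}$ by that corollary, and, again by orthogonality and outermost-fairness, it is equivalent to productivity with respect to any other computable outermost-fair strategy for $\mcl{R}_\aprg$. Together these give the claimed $\cpi{0}{2}$-completeness.

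The only subtle step is the orthogonality verification and the concomitant equivalence of the two productivity notions; the rest is bookkeeping that reuses results already proved in Sections~\ref{sec:fractran} and~\ref{sec:productivity}. I do not expect a genuine obstacle, but care is needed to confirm that the single $\transl{\aprg}$-rule does not create overlaps with the $\smodn{\anum_n'}$, $\shead$, $\stail$ or $\szipn{d}$ rules, and that the case distinction inside the definition of $\translarg{n}$ (defined versus undefined $\anum_n'$) does not introduce two rules with overlapping left-hand sides for the same symbol.
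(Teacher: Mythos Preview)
Your proposal is correct and follows essentially the same route as the paper: hardness via Lemma~\ref{lem:red:fractran:specs} together with Theorem~\ref{fractran:pi02}, and membership in $\cpi{0}{2}$ by appealing back to (the proof of) Theorem~\ref{thm:strategy}. The paper's proof is terser and leaves the orthogonality of $\mcl{R}_\aprg$ and the resulting strategy-independence implicit, whereas you make that bridge explicit; this is a welcome clarification rather than a different argument.
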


\begin{proof}
  Since by Lemma~\ref{lem:red:fractran:specs} the uniform halting problem
  for Fractran programs can be reduced to the problem here, 
  $\cpi{0}{2}$-hardness is a consequence of \mbox{Theorem~\ref{fractran:pi02}}.
  $\cpi{0}{2}$-completeness follows from membership of the problem
  in $\cpi{0}{2}$, which can be established analogously as in the proof 
  of Theorem~\ref{thm:strategy}.
  \qed
\end{proof}
Note that Theorem~\ref{thm:prod:specs} also gives rise to
an alternative proof for the $\cpi{0}{2}$-hardness part 
of Theorem~\ref{thm:strategy}, the result concerning the 
computational complexity of productivity with respect to strategies.

\bibliography{main}

\end{document}